\pgfplotsset{compat=newest}
\theoremstyle{dgthm}
\newtheorem{theorem}{Theorem}
\newtheorem{proposition}[theorem]{Proposition}
\newtheorem{corollary}[theorem]{Corollary}
\newtheorem{remark}[theorem]{Remark}
\theoremstyle{dgdef}
\newtheorem{definition}[theorem]{Definition}
\newtheorem{assumption}[theorem]{Assumption}
\newcommand{\SOS}{\mathrm{SOS}}
\begin{document}
  \author*[1]{Robin Strässer}
  \author[2]{Julian Berberich}
  \author[3]{Manuel Schaller}
  \author[4]{Karl Worthmann}
  \author[2]{Frank Allgöwer} 
  \runningauthor{R.\ Strässer et al.}
  \affil[1]{University of Stuttgart, Institute for Systems Theory and Automatic Control, 70550 Stuttgart, Germany, e-mail: robin.straesser@ist.uni-stuttgart.de}
  \affil[2]{University of Stuttgart, Institute for Systems Theory and Automatic Control, 70550 Stuttgart, Germany}
  \affil[3]{Chemnitz University of Technology, Faculty of Mathematics, 09111 Chemnitz, Germany}
  \affil[4]{Technische Universität Ilmenau, Institute of Mathematics, 98693 Ilmenau, Germany}
  \title{Koopman-based control of nonlinear systems with closed-loop guarantees}
  \runningtitle{Koopman-based control of nonlinear systems}
  \abstract{
  In this paper, we provide a tutorial overview and an extension of a recently developed framework for data-driven control of unknown nonlinear systems with rigorous closed-loop guarantees.
  The proposed approach relies on the Koopman operator representation of the nonlinear system, for which a bilinear surrogate model is estimated based on data.
  In contrast to existing Koopman-based estimation procedures, we state guaranteed bounds on the approximation error using the stability- and certificate-oriented extended dynamic mode decomposition (SafEDMD) framework.
  The resulting surrogate model and the uncertainty bounds allow us to design controllers via robust control theory and sum-of-squares optimization, guaranteeing desirable properties for the closed-loop system.
  We present results on stabilization both in discrete and continuous time, and we derive a method for controller design with performance objectives.
  The benefits of the presented framework over established approaches are demonstrated with a numerical example.  
  }
  \keywords{data-driven control, Koopman theory, EDMD, nonlinear systems}
  \startpage{1}

\maketitle

\section{Introduction}
Data-driven controller design is a topic of significant interest in recent years.
When controlling complex and safety-critical systems based on data, guaranteeing desired properties such as stability and robustness for the closed loop is paramount. 
For linear systems, the literature contains a variety of approaches that admit rigorous theoretical guarantees and which were successfully used in challenging control applications; see~\cite{hou:wang:2013,markovsky:dorfler:2021,waarde2023informativity, faulwasser:ou:pan:schmitz:worthmann:2023,berberich2024overview} for overview articles.
Beyond the linear case, however, theoretical guarantees are less explored and require a tailored treatment depending on specific system classes~\cite{martin:schon:allgower:2023b,berberich2024overview}.

The Koopman operator proposed in~\cite{koopman:1931} provides a powerful foundation for rigorous and practical nonlinear data-driven analysis, prediction, and control, see~\cite{mezic:2005,proctor:brunton:kutz:2018,otto:rowley:2021,mauroy:mezic:susuki:2020,bevanda:sosnowski:hirche:2021} for recent works in this direction.
The basic idea is to represent a nonlinear dynamical system as a potentially infinite-dimensional linear system in suitably chosen lifted coordinates.
For nonlinear systems with input, one obtains a lifted linear~\cite{proctor:brunton:kutz:2018,korda:mezic:2018a} or bilinear~\cite{peitz:otto:rowley:2020,williams:hemati:dawson:kevrekidis:rowley:2016} system which can be used to design a controller for the underlying nonlinear system~\cite{surana:2016,huang:ma:vaidya:2018,bramburger2024synthesizing}.
A key strength of the Koopman operator is that a representation can be estimated based on data using extended dynamic mode decomposition (EDMD)~\cite{williams:kevrekidis:rowley:2015,korda:mezic:2018b} as an efficient regression-based tool.
As a consequence, Koopman-based control approaches were successfully leveraged in various applications~\cite{bruder:fu:gillespie:remy:vasudevan:2021,kim:quan:chung:2023,budisic:mohr:mezic:2012}.
In order to obtain rigorous guarantees for the controlled system, it is important to quantify the approximation error and take it into account in the closed-loop analysis.
Probabilistic error bounds were derived by~\cite{chen:vaidya:2019,nuske:peitz:philipp:schaller:worthmann:2023, mezic:2022,zhang:zuazua:2023}. 
Refining these towards proportional error bounds in the sense that the upper bound depends on the distance to the desired set point enables data-driven model predictive control schemes with guaranteed practical stability~\cite{bold:grune:schaller:worthmann:2025} and stabilizing controllers based on robust control theory~\cite{strasser:berberich:allgower:2023a}.

In this paper, we provide a tutorial overview and extension of a recently developed framework for data-driven controller design for unknown nonlinear systems with rigorous closed-loop guarantees.
The proposed approach relies on the stability- and certificate-oriented EDMD (SafEDMD) approach~\cite{strasser:schaller:worthmann:berberich:allgower:2025,strasser:schaller:worthmann:berberich:allgower:2024b} which consists of two main steps:
1) estimating a bilinear surrogate model of the Koopman operator from data while providing rigorous proportional bounds on its approximation error;
2) designing a controller which robustly stabilizes the uncertain surrogate model and, therefore, is guaranteed to stabilize the unknown nonlinear system.
We will introduce this framework both in continuous time~\cite{strasser:schaller:worthmann:berberich:allgower:2025}, assuming that state derivative measurements are available, and in discrete time~\cite{strasser:schaller:worthmann:berberich:allgower:2024b}, where state and input measurements suffice to stabilize an unknown system via sampled-data feedback.
SafEDMD provides the first Koopman-based control approach for nonlinear systems with rigorous end-to-end guarantees, and it can also be used in the presence of noise~\cite{chatzikiriakos:strasser:allgower:iannelli:2024} or to design data-driven model predictive control schemes with terminal conditions~\cite{worthmann:strasser:schaller:berberich:allgower:2024}.
While the original controller design for the bilinear surrogate model from~\cite{strasser:schaller:worthmann:berberich:allgower:2025,strasser:schaller:worthmann:berberich:allgower:2024b} relies on a conservative over-approximation of the bilinear term, more recent results using sum-of-squares (SOS) optimization were developed in the discrete-time treatment, significantly reducing conservatism and improving the practicality of the approach~\cite{strasser:berberich:allgower:2025}.
As a first major contribution of this work and leveraging the SafEDMD framework, we extend this SOS-based approach from discrete-time results obtained in~\cite{strasser:berberich:allgower:2025} to continuous-time systems, deriving a stabilizing SOS-based controller with significantly reduced conservatism compared to the linear matrix inequality (LMI)-based approach from~\cite{strasser:schaller:worthmann:berberich:allgower:2025}. 
Secondly, as existing Koopman-based control approaches with rigorous guarantees mainly focus on closed-loop stability, we provide an extension towards performance-oriented criteria. 
Based on the SafEDMD framework, we derive Koopman-based controllers with rigorous guarantees on a desired performance specification, e.g., a bounded $\mathcal{L}_2$-gain, for the closed-loop system.
Notably, these guarantees are provided for the bilinear surrogate model and the proportional error bound obtained via SafEDMD and, thereby, for the unknown nonlinear system.
We address this problem in the continuous-time domain and leave the adaptation to discrete time for future research.

The paper is structured as follows.
In Section~\ref{sec:problem-setting}, we state the problem setup.
Next, Section~\ref{sec:bilinear-surrogate} presents the SafEDMD learning architecture and the guaranteed error bounds.
We then use the resulting bilinear surrogate model for designing controllers which guarantee stability (Section~\ref{sec:controller-design}) and performance (Section~\ref{sec:controller-design-performance}) for the closed-loop system.
We present numerical results in Section~\ref{sec:numerics} and conclude the paper in Section~\ref{sec:conclusion}.
\\

\noindent \textbf{Notation}:
An ordered set of integers $[a,b]\cap \bbZ$ is denoted by $[a:b]$.
We write $I_p$ for the $p\times p$ identity matrix and $0_{p\times q}$ for the $p\times q$ zero matrix, where the subscript indices are omitted if the dimension is clear from the context. 
For symmetric $A$, we write $A\succ 0$ ($A\succeq 0$) if $A$ is positive (semi)definite and define negative (semi)definiteness analogously.
We abbreviate $A^\top B A$ by writing $[\star]^\top B A$ and denote matrix blocks which can be inferred from symmetry by $\star$.
The Kronecker product is denoted by $\kron$.
We use a vectorial multi-index $\alpha\in\bbN_0^n$ to define monomials $x^\alpha = x_1^{\alpha_1}\cdots x_n^{\alpha_n}$, where $|\alpha|=\alpha_1+\cdots + \alpha_n$.
The set of all polynomials $s$ with $s(x) = \sum_{\alpha\in\bbN_0^n,|\alpha|\leq d} s_\alpha x^\alpha$, where $s_\alpha\in\bbR$ and $|\alpha|\leq d$ for some degree $d\in\bbN_0$ is denoted by $\bbR[x,d]$. 
Similarly, we define the set of all polynomial matrices of size $p\times q$ with elements in $\bbR[x,d]$ by $\bbR[x,d]^{p\times q}$.
We call a matrix $S\in\bbR[x,2d]^{p\times p}$ an SOS matrix, denoted by $S\in\SOS[x,2d]^p$, if it can be decomposed as $S=T^\top T$ for some $T\in\bbR[x,d]^{q\times p}$ with~$q \in \mathbb{N}$. 
A matrix $S\in\SOS[x,2d]^{p}$ is said to be strictly SOS if $(S-\varepsilon I)\in\SOS[x,2d]^p$ for some $\varepsilon>0$, which we denote by $S\in\SOS_+[x,2d]^p$. 
\section{Problem setting}\label{sec:problem-setting}
We consider control-affine nonlinear systems
\begin{equation}\label{eq:dynamics-nonlinear}
    \dot{x}(t) 
    = f_\mathrm{c}(x(t),u(t))
    = f(x(t)) + \sum_{i=1}^m g_i(x(t))u_i(t)
\end{equation}
with state $x(t)\in\bbR^n$, input $u(t)\in\bbR^m$ at time $t>0$, $x(0) = \hat{x}$, and unknown drift $f:\bbR^n\to\bbR^n$ and input maps $g_i:\bbR^n\to\bbR^n$.
Further, we assume that the origin is a controlled equilibrium, i.e., $f(0)=0$.
Throughout the paper, we provide both a direct treatment of~\eqref{eq:dynamics-nonlinear} in continuous time as well as a sampled-data treatment in discrete time.
For the latter, we sample the system via zero-order hold, i.e., we sample equidistantly in time for piecewise constant input $u(t)=u_k$ for $t\in[t_k,t_{k+1})$ for some sampling period $\Delta t >0$ and $t_k=k\Delta t$, $k\in\bbN_0$.
This leads to the discrete-time system dynamics 
\begin{equation}\label{eq:dynamics-nonlinear-sampled}
    x_{k+1}
    = f_\mathrm{d}(x_k,u_k)
    = x_k + \!\!\int\limits_{t_k}^{t_{k+1}}\!\! f(x(t)) + g(x(t)) u_k \dd t.
\end{equation}
Since the system dynamics are assumed to be unknown, we use Koopman theory and EDMD to derive a data-driven bilinear surrogate model of the nonlinear systems~\eqref{eq:dynamics-nonlinear} and~\eqref{eq:dynamics-nonlinear-sampled} based on collected data samples. 
The obtained surrogate model then allows us to design a controller with closed-loop guarantees for the \emph{unknown} nonlinear system.
In particular, our goal is to design a state-feedback control law such that the origin is exponentially stable (Section~\ref{sec:controller-design}) and satisfies a desired performance specification (Section~\ref{sec:controller-design-performance}).
\section{Koopman-based bilinear surrogate model}\label{sec:bilinear-surrogate}

In this section, we introduce the SafEDMD learning architecture including the estimated surrogate models in both continuous and discrete time, as proposed in~\cite{strasser:schaller:worthmann:berberich:allgower:2025,strasser:schaller:worthmann:berberich:allgower:2024b}.
The basic idea relies on Koopman theory to lift the nonlinear system to an infinite-dimensional bilinear system and derive a finite-dimensional bilinear approximant while explicitly accounting for the remaining residual error.
Before deriving the bilinear surrogate model in continuous time using state-derivative data (Section~\ref{sec:bilinear-surrogate-continuous}) and in discrete time using derivative-free data (Section~\ref{sec:bilinear-surrogate-discrete}), we introduce basic elements of Koopman theory.

In the Koopman approach, introduced almost 100 years ago~\cite{koopman:1931}, nonlinear systems are considered through the lens of observable functions.
Correspondingly, the system is lifted to an infinite-dimensional function space, where its evolution is linear, making it amenable for techniques based on linear regression.
The Koopman operator $\cK_t^u$ for \eqref{eq:dynamics-nonlinear} with constant input $u(t)\equiv u$ acting on an observable $\varphi:\mathbb{X} \to \mathbb{R}$ is defined by 
\begin{equation}
    (\cK_t^u \varphi)(\hat{x}) = \varphi(x(t;\hat{x},u))
\end{equation}
for all times $t\geq 0$ and initial value $\hat{x}\in\bbX$ for some compact set $\bbX\subseteq\bbR^n$. 
Here, $x(t;\hat{x},u)$ denotes the solution of~\eqref{eq:dynamics-nonlinear} at time $t$ for $u(t)\equiv u$ and $x(0)=\hat{x}$.
As a consequence of the semigroup and continuity properties of the flow, it can be shown that $(\cK_t^u)_{t\geq 0}$ on $L^2(\mathbb{X})$ is a strongly continuous semigroup; see, e.g.,~\cite{philipp:schaller:worthmann:peitz:nuske:2024}. 
Thus, we may define its infinitesimal generator $\cL^u$ by
\begin{equation}
    (\cL^u \varphi)(\hat{x}) = \lim_{t\searrow 0} \frac{(\cK_t^u \varphi)(\hat{x}) - \varphi(\hat{x})}{t} = \frac{\dd}{\dd t} \varphi(x(t;\hat{x},u))
\end{equation}
for all $\varphi$ in the domain $D(\cL^u)$, which consists of all $L^2(\mathbb{X})$-functions for which the limit exists.
Notably, the Koopman generator preserves control affinity~\cite{williams:hemati:dawson:kevrekidis:rowley:2016,surana:2016}, i.e., 
\begin{equation}\label{eq:generator-control-affine}
    \cL^u = \cL^0 + \sum_{i=1}^m u_i (\cL^{e_i} - \cL^0)
\end{equation}
holds for any $u\in\bbR^m$ with the Koopman generators $\cL^0$, $\cL^{e_i}$, $i\in[1:m]$, corresponding to the constant control functions $u(t) \equiv 0$, $u(t) \equiv e_i$ for the unit vector $e_i$, $i\in[1:m]$, respectively. 
In contrast to the Koopman generator $\cL^u$, the Koopman operator $\cK_t^u$ only approximately inherits this structure~\cite{peitz:otto:rowley:2020}, where the resulting approximation error is analyzed in~\cite[Thm. 3.1]{strasser:schaller:worthmann:berberich:allgower:2024b}.

Both the Koopman operator as well as the Koopman generator are \emph{linear} yet infinite-dimensional operators and, hence, practical implementations rely on finite-dimensional approximations~\cite{mauroy:mezic:susuki:2020,bevanda:sosnowski:hirche:2021,budisic:mohr:mezic:2012}.
In this work, we leverage EDMD~\cite{proctor:brunton:kutz:2018,williams:kevrekidis:rowley:2015,williams:hemati:dawson:kevrekidis:rowley:2016} to learn the Koopman action on a pre-defined $(N+1)$-dimensional subspace $\mathbb{V}$ spanned by the components of the vector-valued observable function $\Phi(x)=\begin{bmatrix} 1 & \widehat{\Phi}(x)^\top \end{bmatrix}^\top$, where $\widehat{\Phi}$ satisfies
\begin{equation}\label{eq:observables-Lipschitz}
    \widehat{\Phi}(0)=0 
    \quad\text{and}\quad
    \|x\| \leq \|\widehat{\Phi}(x)\| \leq L_\Phi \|x\|
\end{equation}
for all $x \in \mathbb{X}$ and some $L_\Phi > 0$.
In particular, we leverage the SafEDMD framework~\cite{strasser:schaller:worthmann:berberich:allgower:2025,strasser:schaller:worthmann:berberich:allgower:2024b} providing an EDMD-based surrogate model with rigorous error certificates.
The constant observable $\phi_0(x)\equiv 1$ is required to enable the representation of linear systems via the surrogate model, see also the discussion at the end of Section~\ref{sec:bilinear-surrogate-continuous}.
The conditions in~\eqref{eq:observables-Lipschitz} are satisfied, e.g., for the choice
\begin{equation}
    \widehat{\Phi}(x) = \begin{bmatrix}
        x^\top & \phi_{n+1}(x) & \cdots & \phi_N(x)
    \end{bmatrix}^\top,
\end{equation}
where $\phi_\ell$, $\ell\in[n+1,N]$ are continuously differentiable in $\bbX$ and satisfy $\phi_\ell(0)=0$.

\subsection{Continuous-time surrogate model}\label{sec:bilinear-surrogate-continuous}

The approach from~\cite{strasser:schaller:worthmann:berberich:allgower:2025} exploits the structure of the chosen observables in $\Phi(x)$. 
In particular, the constant observable $\phi_0(x)\equiv 1$ implies that the row of any finite-dimensional matrix approximation of $\cL^u$ corresponding to $\phi_0$ should be zero to preserve $(\cL^u \phi_0)(x(t)) = \ddt{}\phi_0(x(t;\hat{x},u)) = 0$. 
Moreover, we exploit $f(0)=0$ to compute 
\begin{equation}
    (\cL^0 \Phi)(0) = \nabla \Phi(0)^\top f(0) = 0,
\end{equation}
where we understand the application of operators to vector-valued functions in a component-wise fashion.
Computing the respective finite-dimensional approximation $L_d^0\Phi(0)=\begin{bmatrix}
    0 \\ L_{d,21}^0
\end{bmatrix}$ yields that
the column $L_{d,21}^0$ corresponding to $\phi_0$ should also be chosen as the zero vector.
Thus, to respect these properties of the Koopman generators $\cL^0$, $\cL^{e_i}$ in finite-dimensional data-driven approximations, we define
\begin{equation}\label{eq:structure-generator}
    L_d^0 = \begin{bmatrix}
        0 & 0_{1\times N} \\
        0_{N\times 1} & A
    \end{bmatrix},
    \; 
    L_d^{e_i} = \begin{bmatrix}
        0 & 0_{1\times N} \\ 
        b_{i} & B_i 
    \end{bmatrix}
\end{equation}
with suitable matrices $A,B_i\in \bbR^{N\times N}$, and vectors $b_i\in \bbR^N$, $i\in[1:m]$.
These are constructed using data samples
\begin{equation}\label{eq:data-continuous}
    \cD=\left\{\{x_j^{\bar{u}}, \dot{x}_j^{\bar{u}}\}_{j=1}^d \text{ with } \bar{u}\in\{0,e_1,...,e_m\} \right\}   
\end{equation}
for $(m+1)$ repeated experiments of autonomous variants of the controlled system with the constant control input $u(t) \equiv \bar{u}$.
Note that we choose the canonical unit vectors as control values for the data generation. 
While a generalization to an arbitrary basis of $\bbR^m$ is straightforward, it still requires collecting $m+1$ samples at each (state) data point; see~\cite{bevanda:driessen:iacob:toth:sosnowski:hirche:2024,bold:philipp:schaller:worthmann:2024} for more flexible sampling schemes w.r.t the control input $u$.
Based on $\cD$, the elements $A$, $b_i$, and $B_i$, $i\in [1:m]$, are chosen as the solution to the linear regression problems
\begin{subequations}\label{eq:linear regression}
    \begin{align}
        A &= \argmin_{A\in\bbR^{N\times N}} \left\| Y^0 - A X^0\right\|_\mathrm{F},
        \\
        \begin{bmatrix}
            b_i & B_i 
        \end{bmatrix}
        &= \argmin_{\substack{b_i\in\bbR^N\\B_i\in\bbR^{N\times N}}} \left\| Y^{e_i} - \begin{bmatrix}
            b_i & B_i
        \end{bmatrix} X^{e_i}\right\|_\mathrm{F}
    \end{align}
\end{subequations}
with the lifted data matrices 
\begin{subequations}
    \begin{align}
        X^{0} &= \begin{bmatrix}
            \widehat{\Phi}(x_1^0) & \cdots & \widehat{\Phi}(x_d^0)
        \end{bmatrix},
        \label{eq:lifted-data-X0}
        \\
        X^{e_i} &= \begin{bmatrix}
            \Phi(x_1^{e_i}) & \cdots & \Phi(x_d^{e_i})
        \end{bmatrix},
        \label{eq:lifted-data-Xei}
        \\
        Y^{\bar{u}} &= \begin{bmatrix}
            \nabla \widehat{\Phi}(x_1^{\bar{u}})^\top \dot{x}_1^{\bar{u}}
            & \cdots & 
            \nabla \widehat{\Phi}(x_d^{\bar{u}})^\top \dot{x}_d^{\bar{u}}
        \end{bmatrix}
        \label{eq:lifted-data-Yu-continuous}
    \end{align}
\end{subequations}
for $\bar{u}\in\{0,e_1,...,e_m\}$.
By exploiting the control affinity in~\eqref{eq:generator-control-affine}, the Koopman-based dynamics $\ddt{}\Phi(x) = \mathcal{L}^u\Phi(x)$ on the lifted space may be approximated by the bilinear dynamics based on the estimators $L_d^0$ and $L_d^{e_i}$, $i\in [1:m]$, defined above, that is,
\begin{equation}\label{eq:bilinear-surrogate-continuous}
    \ddt{}\widehat{\Phi}(x) \approx A \widehat{\Phi}(x) +  B_0 u + \sum_{i=1}^m u_i(B_i - A) \widehat{\Phi}(x)
\end{equation}
with $B_0=\begin{bmatrix}
    b_1 & \cdots & b_m
\end{bmatrix}$.
Note that the constant observable $\phi_0(x)\equiv 1$ is only implicitly included in the bilinear dynamics via the linear control term $B_0 u$, which allows considering the reduced observable function $\widehat{\Phi}(x)$ satisfying $\widehat{\Phi}(0)=0$.
Here, the dynamics hold only approximately since we estimate the unknown system matrices via \emph{finite} data samples and restrict the Koopman action onto a \emph{finite} dictionary. 
The resulting error is rigorously analyzed in Section~\ref{sec:error-bounds}.

\subsection{Discrete-time surrogate model}\label{sec:bilinear-surrogate-discrete}
The main drawback of the continuous-time surrogate model in~\eqref{eq:bilinear-surrogate-continuous} is the reliance on state-derivative measurements; cf.~\eqref{eq:data-continuous}. 
As derivative data is typically hard to obtain, we formulate a discrete-time surrogate model~\cite{strasser:schaller:worthmann:berberich:allgower:2024b} in the following, which only needs data samples of the state.
To this end, for a sampling period $\Delta t>0$, we consider the discretized system dynamics~\eqref{eq:dynamics-nonlinear-sampled} and the corresponding Koopman operator $\cK_{\Delta t}^u$ satisfying $\Phi(x_{k+1})=(\cK_{\Delta t}^{u_k}\Phi)(x_k)$.
Due to the structure of the vector-valued observable~$\Phi$ including the constant observable $\phi_0(x) \equiv 1$ and $f(0)=0$, we have $\phi_0(x_+) = \phi_0(x_k) \equiv 1$ and $(\cK_t^0 \Phi)(0)=\Phi(0)$. 
Thus, we may, similarly as for the Koopman generator in Section~\ref{sec:bilinear-surrogate-continuous}, leverage a particular structure of the Koopman operators $\cK_{\Delta t}^0$, $\cK_{\Delta t}^{e_i}$.
Analogous to~\eqref{eq:structure-generator}, we define
\begin{equation}\label{eq:structure-operator}
    K_{\Delta t,d}^0 = \begin{bmatrix}
        1 & 0_{1\times N} \\
        0_{N\times 1} & A
    \end{bmatrix},
    \; 
    K_{\Delta t,d}^{e_i} = \begin{bmatrix}
        1 & 0_{1\times N} \\ 
        b_{i} & B_i 
    \end{bmatrix}
\end{equation}
for $i\in[1:m]$.
To construct $A$, $b_i$, $B_i$, $i\in [1:m]$ we collect data 
\begin{equation}\label{eq:data-discrete}
    \cD=\left\{
        \{x_j^{\bar{u}},x(\Delta t;x_j^{\bar{u}},\bar{u})\}_{j=1}^d \text{ with } \bar{u}\in\{0,e_1,...,e_m\} 
    \right\}   
\end{equation}
in $(m+1)$ repeated experiments of autonomous variants of the controlled system with the constant control input $u(t) \equiv \bar{u}$.
Here, we emphasize that only pairs of states and successor states are necessary but no state-derivative data.
Based on the collected data, we solve the linear regression problems~\eqref{eq:linear regression} to obtain the unknowns $A$, $b_i$, $B_i$ based on the data matrices $X^0$ in~\eqref{eq:lifted-data-X0}, $X^{e_i}$ in~\eqref{eq:lifted-data-Xei}, and 
\begin{equation}\label{eq:lifted-data-Yu-discrete}
    Y^{\bar{u}} = \begin{bmatrix}
        \widehat{\Phi}(x(\Delta t;x_1^{\bar{u}},\bar{u}))
        & \cdots & 
        \widehat{\Phi}(x(\Delta t;x_d^{\bar{u}},\bar{u}))
    \end{bmatrix}
\end{equation}
for $\bar{u}\in\{0,e_1,...,e_m\}$. 
With slight abuse of notation and to unify the framework, we use the same symbols $A$, $b_i$, $B_i$ to denote matrices of the bilinear surrogate model in continuous and discrete time, although they result from different estimation procedures, i.e., using~\eqref{eq:lifted-data-Yu-continuous} and~\eqref{eq:lifted-data-Yu-discrete} in~\eqref{eq:linear regression}, respectively.  
As for the continuous-time surrogate, we enforce bilinear dynamics with the lifted state $\widehat{\Phi}(x)$, i.e., we approximate $\Phi(x_+) = (\mathcal{K}_{\Delta_t}^u\Phi)(x)$ by
\begin{equation}\label{eq:bilinear-surrogate-discrete}
    \widehat{\Phi}(x_+) \approx A \widehat{\Phi}(x) + B_0 u + \sum_{i=1}^m u_i (B_i - A) \widehat{\Phi}(x)
\end{equation}
with $B_0=\begin{bmatrix}
    b_1 & \cdots & b_m
\end{bmatrix}$. 
Besides the errors due to finite data and finitely many observables, the representation~\eqref{eq:bilinear-surrogate-discrete} introduces an additional error since the Koopman operator is only approximately control affine and, thus, admits the bilinear structure only approximately~\cite{peitz:otto:rowley:2020,bold:grune:schaller:worthmann:2025}.
The occurring error terms are analyzed in the following.

\subsection{Certified error bounds}\label{sec:error-bounds}
In this section, we present rigorous error bounds for the derived data-driven bilinear surrogate models.
To this end, we define  
\begin{equation}
    (\sigma \widehat{\Phi})(x) 
    = \begin{cases}
        \ddt{}\widehat{\Phi}(x) & \text{continuous time} \\
        \widehat{\Phi}(x_+) & \text{discrete time}
    \end{cases}
\end{equation}
for a combined representation of both continuous and discrete time. 
Then, as shown in~\cite{strasser:schaller:worthmann:berberich:allgower:2025,strasser:schaller:worthmann:berberich:allgower:2024b}, the structure in~\eqref{eq:structure-generator} and~\eqref{eq:structure-operator} allows to formulate the bilinear surrogate model
\begin{equation}\label{eq:bilinear-surrogate}
    (\sigma\widehat{\Phi})(x) = A \widehat{\Phi}(x) +  B_0 u + \tB (u\kron \widehat{\Phi}(x)) + r(x,u),
\end{equation}
where $\tB=\begin{bmatrix}
    B_1 - A & \cdots & B_m - A
\end{bmatrix}$ and $\kron$ denotes the Kronecker product, which we introduced for a compact representation.
In contrast to~\eqref{eq:bilinear-surrogate-continuous} and~\eqref{eq:bilinear-surrogate-discrete}, the formulation in~\eqref{eq:bilinear-surrogate} is \emph{exact} as we account for the residual $r(x,u)$ including all occurring errors. 
Note that the residual may (exemplarily for the continuous-time setting) be decomposed via
\begin{align}
    \begin{bmatrix}0\\r(x,u)\end{bmatrix} 
    &= ((\mathcal{L}^u-L^u_d)\Phi)(x)
    \nonumber\\
    &= ((\mathcal{L}^u - P_\mathbb{V}\mathcal{L}^u_{\vert\mathbb{V}} + P_\mathbb{V}\mathcal{L}^u_{\vert\mathbb{V}} - L^u_d)\Phi)(x),
\label{eq:residual-decomposition}
\end{align}
where $P_\mathbb{V}$ is the $L^2(\mathbb{X})$-orthogonal projection onto $\mathbb{V}$ and $L_d^u$ is defined analogous to~\eqref{eq:generator-control-affine} but with the data-driven estimates $L_d^0,L_d^{e_i}$, $i\in[1:m]$. 
In~\eqref{eq:residual-decomposition}, the first term $\mathcal{L}^u - P_\mathbb{V}\mathcal{L}^u_{\vert\mathbb{V}}$ corresponds to the \emph{projection error} depending on the dictionary $\mathbb{V}$ and the second term $P_\mathbb{V}\mathcal{L}^u_{\vert\mathbb{V}} - L^u_d$ is the \emph{learning} or \emph{estimation error} depending on the number of data points $d$. 
While a proportional bound on the learning error may be obtained with some non-restrictive assumptions on the dictionary, any bound on the projection error inherently depends on the dictionary. 
Thus, we assume a bound on the latter in the following and refer to \cite{philipp:schaller:worthmann:peitz:nuske:2024,kohne:philipp:schaller:schiela:worthmann:2025} for projection error bounds for finite element-based and kernel-based dictionaries.

In the following, we present a \emph{proportional} error bound on the residual which can be used for a robust controller design.
\begin{proposition}\label{prop:proportional-error-bound}
    Suppose that the data samples $\cD$ are drawn i.i.d. in a compact set $\bbX$.
    Let $\tilde{c}_x$, $\tilde{c}_u$ be such that the proportional bound on the projection error
       \begin{align*}
        \|(\mathcal{L}^u \Phi)(x) - (P_\bbV \mathcal{L}^u_{\vert\mathbb{V}})\Phi(x) \| 
        \leq \tilde{c}_x \|\widehat{\Phi}(x)\| + \tilde{c}_u \|u\|
    \end{align*}
    for continuous time or
    \begin{align*}
        \|(\cK_{\Delta t}^u \Phi)(x) - (P_\bbV\cK_{\Delta t}^u\vert_{\mathbb{V}})\Phi(x) \| 
        \leq \tilde{c}_x \|\widehat{\Phi}(x)\| + \tilde{c}_u \|u\|
    \end{align*}
    for discrete time holds.
    
    Then, for any probabilistic tolerance $\delta \in (0,1)$ and data length $d$, there exist constants 
    \begin{equation*}
        \bar{c}_x,\bar{c}_u\in
        \begin{cases}
            \cO(\nicefrac{1}{\sqrt{\delta d}}) & \text{continuous time} \\
            \cO(\nicefrac{1}{\sqrt{\delta d}} + \Delta t^2) & \text{discrete time for $\Delta t >0$}
        \end{cases}
    \end{equation*}
    such that the residual error satisfies the proportional bound
    \begin{equation}\label{eq:proportional-error-bound}
        \|r(x,u)\| \leq c_x\|\widehat{\Phi}(x)\| + c_u\|u\|
    \end{equation}
    with $c_x=\bar{c}_x+\tilde{c}_x$ and $c_u=\bar{c}_u + \tilde{c}_u$ for all $x\in\bbX$ with probability $1-\delta$.        
\end{proposition}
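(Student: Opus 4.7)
The plan is to work from the residual decomposition in~\eqref{eq:residual-decomposition} (and its direct analog for the discrete-time case), which splits $r(x,u)$ into a \emph{projection error} $(\cL^u - P_\bbV\cL^u_{\vert\bbV})\Phi(x)$ and a \emph{learning error} $(P_\bbV\cL^u_{\vert\bbV} - L_d^u)\Phi(x)$. The projection error is immediately bounded by the hypothesis with constants $\tilde c_x,\tilde c_u$, so the bulk of the work is to bound the learning error by $\bar c_x\|\widehat\Phi(x)\| + \bar c_u\|u\|$ with the claimed rate and with the \emph{proportional} structure (i.e., vanishing at the origin).

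For the learning error, I would first invoke the control-affine identity~\eqref{eq:generator-control-affine} to reduce the operator-level error to the $m+1$ autonomous estimation errors $L_d^0 - P_\bbV\cL^0_{\vert\bbV}$ and $L_d^{e_i} - P_\bbV\cL^{e_i}_{\vert\bbV}$, $i\in[1:m]$, the first acting on $\Phi(x)$ and the latter entering linearly in the components of $u$. Each of the regression problems~\eqref{eq:linear regression} is an empirical least-squares approximation of the $L^2(\bbX)$-orthogonal projection onto $\bbV$. A standard Hoeffding/Bernstein-type concentration argument for i.i.d.\ samples, combined with the boundedness of $\Phi$ and its derivatives on the compact set $\bbX$, yields errors of order $\cO(1/\sqrt{\delta d})$ with probability at least $1-\delta$ after a union bound over the $m+1$ subproblems. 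The crucial point is that the block structure imposed in~\eqref{eq:structure-generator} (respectively~\eqref{eq:structure-operator}) fixes the constant-observable row/column to be zero \emph{by construction}, so the estimation error only acts on $\widehat\Phi(x)$; together with $\widehat\Phi(0)=0$, this delivers a bound that is proportional to $\|\widehat\Phi(x)\|$ rather than to $\|\Phi(x)\|$, while the $u$-dependent summands inherit the factor $\|u\|$ from~\eqref{eq:generator-control-affine}.

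The discrete-time case follows the same route applied to $K_{\Delta t,d}^u$, but with one extra ingredient: unlike the generator, the Koopman operator $\cK_{\Delta t}^u$ is only \emph{approximately} control affine, so substituting the operator analog of~\eqref{eq:generator-control-affine} leaves a remainder that can be quantified via a Taylor expansion of the flow in $\Delta t$, giving an additional term of order $\cO(\Delta t^2)$ (cf.~\cite[Thm.~3.1]{strasser:schaller:worthmann:berberich:allgower:2024b}). This accounts for the additive $\Delta t^2$ in the rate of $\bar c_x,\bar c_u$ in the statement. Combining projection and learning errors via the triangle inequality and setting $c_x=\bar c_x+\tilde c_x$, $c_u=\bar c_u+\tilde c_u$ closes the argument.

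The main technical obstacle I anticipate is preserving the proportional structure in the concentration step. A naive uniform bound on $\|(P_\bbV\cL^u_{\vert\bbV} - L_d^u)\Phi(x)\|$ would scale only with $\|\Phi(x)\|$, which does not tend to zero at the origin because of the constant observable $\phi_0\equiv 1$. Carefully isolating the constant component via the zero-block structure of~\eqref{eq:structure-generator}/\eqref{eq:structure-operator} and exploiting $\widehat\Phi(0)=0$ (together, if needed, with the lower bound $\|x\|\le\|\widehat\Phi(x)\|$ from~\eqref{eq:observables-Lipschitz} to interpret the bound back in the state $x$) is what makes the proportional error bound possible; once this structural point is set up, the concentration rate and the discrete-time correction reduce to essentially standard arguments already carried out in~\cite{strasser:schaller:worthmann:berberich:allgower:2025,strasser:schaller:worthmann:berberich:allgower:2024b}.
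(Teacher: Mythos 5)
Your proposal is correct and follows essentially the same route as the paper: the paper's own proof simply defers to \cite[Prop.~5]{strasser:schaller:worthmann:berberich:allgower:2025} and \cite[Thm.~3.1, Cor.~3.2]{strasser:schaller:worthmann:berberich:allgower:2024b}, and your sketch reconstructs exactly the argument behind those results, starting from the projection/learning-error decomposition~\eqref{eq:residual-decomposition} that the paper itself lays out. The structural point you emphasize---that the zero blocks in~\eqref{eq:structure-generator} and~\eqref{eq:structure-operator} together with $\widehat{\Phi}(0)=0$ are what make the bound proportional rather than merely uniform---is precisely the mechanism the cited works rely on.
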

\begin{proof}
    This follows from~\cite[Prop.~5]{strasser:schaller:worthmann:berberich:allgower:2025} and~\cite[Thm.~3.1,~Cor.~3.2]{strasser:schaller:worthmann:berberich:allgower:2024b} in continuous and discrete time, respectively.
\end{proof}
The proportional structure of the bound in~\eqref{eq:proportional-error-bound} is crucial for the following controller design. 
In particular, we exploit that the right-hand side vanishes at our desired equilibrium $(x,u)=0$. 
Further, the constants $\bar{c}_x,\bar{c}_u$ can be made arbitrarily small when collecting sufficiently many data points $d$ (with a small enough sampling rate $\Delta t$ in discrete time). 
Thus, the size of the resulting learning error may be controlled through sampling.
Moreover, the constants $\bar{c}_x,\bar{c}_u$ can be explicitly evaluated for a given data length and sampling rate.
On the other hand, deriving rigorous proportional bounds $\tilde{c}_x,\tilde{c}_u$ on the projection error is an interesting and challenging problem for future research, which may be resolved based on novel deterministic bounds on the full approximation error of a nonlinear surrogate using kernel EDMD and a linear lifting with flexible sampling w.r.t.\ the input~$u$~\cite{kohne:philipp:schaller:schiela:worthmann:2025,bold:philipp:schaller:worthmann:2024}.
In particular,~\cite{strasser:schaller:berberich:worthmann:allgower:2025} shows that a nonlinear data-informed dictionary admits a proportional error bound on the full approximation error of the corresponding bilinear surrogate model, validating the assumed form in Proposition~\ref{prop:proportional-error-bound}.
\section{Koopman-based control with stability guarantees}\label{sec:controller-design}
In this section, we leverage the bilinear surrogate model~\eqref{eq:bilinear-surrogate} with the guaranteed residual error bound~\eqref{eq:proportional-error-bound} according to Proposition~\ref{prop:proportional-error-bound} to derive a robust controller with closed-loop stability guarantees for the underlying nonlinear system.
We separately address the continuous-time (Section~\ref{subsec:controller-design-continuous}) and discrete-time (Section~\ref{subsec:controller-design-discrete}) case.
Throughout the paper, we consider the notion of exponential stability.
\begin{definition}
    The controller $\kappa(x)$ exponentially stabilizes the origin of~\eqref{eq:dynamics-nonlinear} in $\cX$ if there exist $\alpha,\beta>0$ such that $\|x(t;\hat{x},\kappa(x))\|\leq \alpha \|\hat{x}\| e^{-\beta t}$ for all $\hat{x}\in\cX$, $t\geq 0$.
\end{definition}

\subsection{Continuous-time controller design}\label{subsec:controller-design-continuous}
In the following theorem, we design a continuous-time feedback controller, which guarantees closed-loop exponential stability for the uncertain bilinear system~\eqref{eq:bilinear-surrogate} and, thereby, for the underlying nonlinear system~\eqref{eq:dynamics-nonlinear}.
\begin{theorem}\label{thm:stability-continuous-time}
    Let the assumptions of Proposition~\ref{prop:proportional-error-bound} hold and let a probabilistic tolerance $\delta \in (0,1)$ be given.
    If there exist $\alpha\in\bbN$, $\beta\in\bbN_0$ with $\alpha \geq \max\{1,\beta\}$, $P=P^\top \succ 0$ of size $N\times N$, $L\in\bbR[z,2\alpha-1]^{m\times N}$, $\tau\in\SOS_+[z,2\beta]$, and $\rho > 0$ such that $Q(z)\in\mathrm{SOS}[z,2\alpha]^{2N+m}$ with $Q(z)$ defined in~\eqref{eq:stability-continuous-time} and $z\in\bbR^N$,
    \begin{figure*}[!t]
        \small
        \begin{equation}\label{eq:stability-continuous-time}
            Q(z) = 
            \begin{bmatrix}
                -\left(AP + B_0 L(z) + \tB(L(z)\kron z)\right) - \left(AP + B_0 L(z) + \tB(L(z)\kron z)\right)^\top - \rho I_N - \tau(z) I_N & -P & -L(z)^\top \\
                -P & \frac{\tau(z)}{2c_x^2} I_N & 0 \\
                -L(z) & 0 & \frac{\tau(z)}{2c_u^2}I_m
            \end{bmatrix}
        \end{equation} 
        \normalsize
        \hrule
        \vspace*{-0.6\baselineskip}
    \end{figure*}
    then the controller $\mu(x) = L(\widehat{\Phi}(x)) P^{-1} \widehat{\Phi}(x)$ exponentially stabilizes the origin of the continuous-time nonlinear system~\eqref{eq:dynamics-nonlinear} for all initial conditions in $\Omega(c^*)$ with probability $1-\delta$, where
    \begin{subequations}\label{eq:stability-Omega-c-star}
        \begin{equation}\label{eq:stability-Omega}
            \Omega(c) = \left\{
                x\in\bbR^n 
                \mid
                \widehat{\Phi}(x)^\top P^{-1} \widehat{\Phi}(x) \leq c
            \right\}
        \end{equation} 
        and 
        \begin{equation}\label{eq:stability-c-star}
            c^* = \argmax_c \left\{
                c\in\bbR_+ 
                \mid 
                \Omega(c) \subseteq \bbX
            \right\}.
        \end{equation}
    \end{subequations}
\end{theorem}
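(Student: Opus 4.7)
The plan is to interpret the SOS matrix inequality as a sufficient condition for a Lyapunov decrease on the \emph{exact} bilinear representation~\eqref{eq:bilinear-surrogate} (which includes the residual $r$), and then translate this back to the underlying nonlinear system via the lower bound $\|x\|\leq\|\widehat\Phi(x)\|$ from~\eqref{eq:observables-Lipschitz}. Concretely, I would take the Lyapunov candidate $V(x)=\widehat\Phi(x)^\top P^{-1}\widehat\Phi(x)$, which by~\eqref{eq:observables-Lipschitz} satisfies the sandwich $\lambda_{\min}(P^{-1})\|x\|^2 \leq V(x) \leq \lambda_{\max}(P^{-1})L_\Phi^2\|x\|^2$. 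Introducing $z=\widehat\Phi(x)$ and the coordinate change $y=P^{-1}z$, the closed-loop feedback becomes $u=L(z)y$, and, using the Kronecker identity $(L(z)y)\otimes z=(L(z)\otimes z)y$, the bilinear dynamics~\eqref{eq:bilinear-surrogate} take the compact form $\dot z = [AP+B_0L(z)+\tB(L(z)\otimes z)]y+r(x,u)$. Writing $M(z):=AP+B_0L(z)+\tB(L(z)\otimes z)$, a direct calculation gives $\dot V = y^\top(M(z)+M(z)^\top)y + 2y^\top r$.

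Next, I would absorb the residual by Young's inequality with the SOS multiplier $\tau(z)$: for any $\tau(z)>0$,
\begin{equation*}
2y^\top r \leq \tau(z)\|y\|^2 + \tfrac{1}{\tau(z)}\|r\|^2,
\end{equation*}
and the proportional bound from Proposition~\ref{prop:proportional-error-bound} combined with $(a+b)^2\leq 2a^2+2b^2$ yields $\|r\|^2\leq 2c_x^2 y^\top P^2 y + 2c_u^2 y^\top L(z)^\top L(z) y$. Collecting terms gives
\begin{equation*}
\dot V \leq y^\top\!\Bigl[M+M^\top+\tau I+\tfrac{2c_x^2}{\tau}P^2+\tfrac{2c_u^2}{\tau}L^\top L\Bigr]y.
\end{equation*}
Requiring the bracketed expression to be $\preceq -\rho I$ pointwise in $z$ is, by a Schur complement on the strictly positive block $\mathrm{diag}(\tfrac{\tau}{2c_x^2}I,\tfrac{\tau}{2c_u^2}I)$ (positivity ensured by $\tau\in\SOS_+$), exactly equivalent to $Q(z)\succeq 0$ as defined in~\eqref{eq:stability-continuous-time}. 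Hence the SOS assumption $Q(z)\in\mathrm{SOS}[z,2\alpha]^{2N+m}$ guarantees $\dot V(x)\leq -\rho\|y\|^2\leq -\tfrac{\rho}{\lambda_{\max}(P)}V(x)$ for all $x$ such that Proposition~\ref{prop:proportional-error-bound} applies, i.e.\ all $x\in\bbX$.

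To conclude exponential stability on $\Omega(c^*)$, I would invoke the standard sublevel-set argument: the strict decrease $\dot V<0$ implies that the sublevel set $\Omega(c)=\{x:V(x)\leq c\}$ is forward invariant for every $c\leq c^*$. By the definition of $c^*$ in~\eqref{eq:stability-c-star}, $\Omega(c^*)\subseteq\bbX$, so any trajectory started in $\Omega(c^*)$ remains in $\bbX$ for all $t\geq 0$ with probability $1-\delta$, and the error bound from Proposition~\ref{prop:proportional-error-bound} is valid along the whole trajectory. Gronwall then yields $V(x(t))\leq V(\hat x)e^{-\rho t/\lambda_{\max}(P)}$, and translating back via~\eqref{eq:observables-Lipschitz} provides the exponential estimate on $\|x(t)\|$ with constants $\alpha,\beta$ depending on $L_\Phi$, $\lambda_{\min}(P)$, $\lambda_{\max}(P)$ and $\rho$.

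The main obstacle is the bookkeeping around the bilinear term: one has to recognize that the seemingly awkward expression $\tB(L(z)\otimes z)$ arises precisely from rewriting $\tB(u\otimes\widehat\Phi(x))$ under the state-dependent feedback $u=L(z)P^{-1}z$, so that after the substitution the entire Lyapunov derivative becomes an $N\times N$ polynomial quadratic form in $y$ whose coefficients are polynomial in $z$. A secondary subtlety is matching degrees so that $Q(z)$ is an honest polynomial matrix of degree $2\alpha$: the constraint $\alpha\geq\max\{1,\beta\}$ exactly ensures that both the bilinear contribution $\tB(L(z)\otimes z)$ (degree $2\alpha$) and the multiplier term $\tau(z)I$ (degree $2\beta$) fit into the SOS certificate, and that Proposition~\ref{prop:proportional-error-bound} applies pointwise along closed-loop trajectories thanks to the invariance of $\Omega(c^*)$.
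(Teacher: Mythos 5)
Your proposal is correct and follows essentially the same route as the paper's proof: the same Lyapunov candidate $V(x)=\widehat\Phi(x)^\top P^{-1}\widehat\Phi(x)$, the same quadratic relaxation $\|r\|^2\leq 2c_x^2\|\widehat\Phi(x)\|^2+2c_u^2\|u\|^2$ of the proportional bound, the same role of $\tau$ as a multiplier, the same Schur-complement link to $Q(z)$, and the same sublevel-set invariance argument for $\Omega(c^*)\subseteq\bbX$. The only difference is presentational: your substitution $y=P^{-1}\widehat\Phi(x)$ plus Young's inequality $2y^\top r\leq\tau\|y\|^2+\tau^{-1}\|r\|^2$ replaces the paper's chain of congruence transformations and the generalized S-procedure with multiplier $\tau^{-1}$, arriving at the identical decrease $\dot V\leq-\rho\|P^{-1}\widehat\Phi(x)\|^2$.
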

\begin{proof}
    See Appendix~\ref{sec:app:proof-stability}.
\end{proof}
Theorem~\ref{thm:stability-continuous-time} provides a controller design approach with stability guarantees for the uncertain bilinear surrogate model~\eqref{eq:bilinear-surrogate} and, thereby, for the nonlinear system~\eqref{eq:dynamics-nonlinear}.
Computing the state-feedback law~$\mu$ requires solving an SOS program, which can be tackled using standard solvers; compare Section~\ref{sec:numerics}.
Since the error bound~\eqref{eq:proportional-error-bound} is only guaranteed on the sampling region $\bbX$, the deduced closed-loop guarantees are restricted to the set $\Omega(c^*)\subseteq\bbX$, where the latter is the maximal Lyapunov sublevel set in $\bbX$.
In particular, the set $\Omega$ is positive invariant and, thus, for any initial condition in $\Omega$ the state remains in $\Omega$ for all times.
Here, we obtain stability guarantees with probability $1-\delta$ due to the established error bound~\eqref{eq:proportional-error-bound}, which then holds for all $x\in\bbX$.
Theorem~\ref{thm:stability-continuous-time} improves the existing controller design from~\cite{strasser:schaller:worthmann:berberich:allgower:2025}, which relies on a conservative over-approximation of the bilinear dynamics.
Instead, the proposed approach in Theorem~\ref{thm:stability-continuous-time} directly exploits the bilinearity via SOS techniques and, thereby, offers a less conservative and more flexible controller. 
More precisely, the SOS-based design yields a larger guaranteed region of attraction (RoA) $\Omega$ and allows for larger error bounds $c_x$, $c_u$ in~\eqref{eq:proportional-error-bound}.
The latter is particularly beneficial since it implies that the controller design may be feasible for fewer data samples in comparison to~\cite{strasser:schaller:worthmann:berberich:allgower:2025}.

\subsection{Discrete-time controller design}\label{subsec:controller-design-discrete}
Next, we design a sampled-data controller based on the discrete-time bilinear surrogate model~\eqref{eq:bilinear-surrogate}.
In contrast to the continuous-time case, where a polynomial control law suffices to stabilize an uncertain bilinear system, the discrete-time setting requires a different controller parametrization.
In particular,~\cite{strasser:berberich:allgower:2025} develops a rational controller parametrization in the lifted state, which leads to a sampled-data feedback guaranteeing exponential stability of the uncertain bilinear surrogate model and, hence, of the nonlinear system~\eqref{eq:dynamics-nonlinear}.
The following theorem is derived in~\cite{strasser:berberich:allgower:2025}.
\begin{theorem}\label{thm:stability-discrete-time}
    Let the assumptions of Proposition~\ref{prop:proportional-error-bound} hold and let a probabilistic tolerance $\delta \in (0,1)$ be given.
    Assume $f_c(x,u)$ to be continuous and locally Lipschitz in its first argument around the origin.
    If there exist $\alpha\in\bbN$, $\beta\in\bbN_0$ with $\alpha\geq \beta$, $P=P^\top\succ 0$ of size $N\times N$, $L_\mathrm{n}\in\bbR[z,2\alpha-1]^{m\times N}$, $\tau\in\SOS_+[z,2\beta]$, $u_\dd\in\SOS_+[z,2\alpha]$, and $\rho > 0$ such that~\eqref{eq:stability-discrete-time}
    \begin{figure*}[!t]
        \small
        \begin{equation}\label{eq:stability-discrete-time}
            \begin{bmatrix}
                u_\dd(z) P - \tau(z) I_N & 0 & 0 & u_\dd(z) A P + B_0 L_\mathrm{n}(z) + \tB (L_\mathrm{n}(z) \kron z) \\
                \star & \frac{\tau(z)}{2c_x^2} I_N & 0 & u_\dd(z) P \\
                \star & \star & \frac{\tau(z)}{2c_u^2} I_m & L_\mathrm{n}(z) \\
                \star & \star & \star & u_\dd(z) (P - \rho I_N)
            \end{bmatrix}
            \in\SOS[z,2\alpha]^{3N + m}
        \end{equation} 
        \normalsize
        \hrule
        \vspace*{-0.5\baselineskip}
    \end{figure*}
    holds with $z\in\bbR^N$, 
    then the sampled-data controller 
    \begin{equation}\label{eq:sampled_data_controller}
        \mu_\mathrm{s}(x(t)) = \mu(x(k\Delta t))
        ,\;
        t\in[k\Delta t,(k+1)\Delta t)
        ,\; 
        k\geq 0
    \end{equation}
    with $\mu(x) = \frac{1}{u_\dd(\widehat{\Phi}(x))} L_\mathrm{n}(\widehat{\Phi}(x)) P^{-1} \widehat{\Phi}(x)$ exponentially stabilizes the origin of the continuous-time nonlinear system~\eqref{eq:dynamics-nonlinear} for all initial conditions in $\Omega(c^*)$ with probability $1-\delta$, where $\Omega$ and $c^*$ are defined in~\eqref{eq:stability-Omega-c-star}.
\end{theorem}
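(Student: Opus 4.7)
The plan is to use $V(z) = z^\top P^{-1} z$ with $z = \widehat{\Phi}(x)$ as a Lyapunov function for the lifted discrete-time closed loop of~\eqref{eq:bilinear-surrogate} and to derive from~\eqref{eq:stability-discrete-time} that $V$ contracts geometrically at every sampling instant $t_k = k\Delta t$ for which $x(t_k) \in \Omega(c^*)$. The central structural issue, handled as in~\cite{strasser:berberich:allgower:2025}, is that the controller is rational in $z$; multiplying through by $u_\dd(z) \in \SOS_+$, which is pointwise positive, turns all inequalities into polynomial ones while preserving signs.

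Concretely, I would instantiate the matrix SOS condition~\eqref{eq:stability-discrete-time} at $z = \widehat{\Phi}(x)$ and perform successive Schur complements with respect to the diagonal blocks $\tau(z)/(2c_x^2) I_N$ and $\tau(z)/(2c_u^2) I_m$. Substituting the controller in the form $u_\dd(z)\, u = L_\mathrm{n}(z)\, P^{-1} z$ collapses the $(1,4)$-block $u_\dd A P + B_0 L_\mathrm{n} + \tB (L_\mathrm{n} \kron z)$ into $u_\dd(z)$ times the noise-free lifted successor, so that the residual $r(x,u)$ enters additively. The S-procedure, using the multiplier $\tau(z) \in \SOS_+$ together with the quadratic residual bound $\|r\|^2 \leq 2 c_x^2 \|z\|^2 + 2 c_u^2 \|u\|^2$ (which follows from~\eqref{eq:proportional-error-bound} via $(a+b)^2 \leq 2a^2 + 2b^2$), then absorbs the residual term and yields the pointwise polynomial inequality
\begin{equation*}
    u_\dd(z)^2 \bigl( V(\widehat{\Phi}(x_+)) - V(z) + \rho\, \|z\|^2 \bigr) \leq 0,
\end{equation*}
holding for every realization of the residual consistent with~\eqref{eq:proportional-error-bound}. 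Dividing by $u_\dd(z)^2 > 0$ yields the Lyapunov decrease $V(\widehat{\Phi}(x_+)) - V(z) \leq -\rho\, \|z\|^2$ on $\bbX$ with probability $1-\delta$.

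From this point on, the argument is standard. The sublevel set $\Omega(c^*) \subseteq \bbX$ is positive invariant, so the residual bound of Proposition~\ref{prop:proportional-error-bound} remains active; combining the decay of $V$ with~\eqref{eq:observables-Lipschitz}, i.e., $\|x\| \leq \|\widehat{\Phi}(x)\| \leq L_\Phi \|x\|$, yields exponential decay of $\|x(t_k)\|$. To bridge from the sampled to the continuous trajectory, I invoke the assumed local Lipschitz continuity of $f_\mathrm{c}$ near the origin together with a Gr\"onwall estimate on each interval $[t_k, t_{k+1})$ to bound $\|x(t) - x(t_k)\|$ by a constant multiple of $\|x(t_k)\|$ on the compact set $\Omega(c^*)$; this upgrades geometric decay at samples to a continuous-time exponential envelope $\|x(t)\| \leq \alpha \|\hat{x}\| e^{-\beta t}$ as required by the definition of exponential stability.

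The main technical difficulty lies in the Schur-complement and S-procedure bookkeeping, in particular checking that the prescribed degree relations $\alpha \geq \beta$ for $L_\mathrm{n}, \tau, u_\dd$ are consistent so that clearing $u_\dd$ produces valid polynomial identities, and that the coefficients $2 c_x^2, 2 c_u^2$ in~\eqref{eq:stability-discrete-time} are exactly those generated by the quadratic expansion of the residual bound. A secondary but routine step is verifying that $\Omega(c^*)$ lies inside the Lipschitz neighborhood of the origin assumed for $f_\mathrm{c}$, which follows from compactness of $\Omega(c^*) \subseteq \bbX$ and the fact that it contains the origin.
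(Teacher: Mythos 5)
Your proposal cannot be checked against a proof in this paper, because the paper does not prove Theorem~\ref{thm:stability-discrete-time} itself: its ``proof'' is a citation of \cite[Cor.~4]{strasser:berberich:allgower:2025}. Judged against the closest in-paper analogue --- the continuous-time proof of Theorem~\ref{thm:stability-continuous-time} in Appendix~\ref{sec:app:proof-stability} --- your architecture is essentially the same and is sound: Lyapunov candidate $V=\widehat{\Phi}(x)^\top P^{-1}\widehat{\Phi}(x)$, congruence transformations and Schur complements on the SOS matrix condition, the generalized S-procedure with multiplier built from $\tau\in\SOS_+$, the relaxation $\|r\|^2\leq 2c_x^2\|\widehat{\Phi}(x)\|^2+2c_u^2\|u\|^2$ explaining the factors $2c_x^2,2c_u^2$, positive invariance of $\Omega(c^*)\subseteq\bbX$ so that the probabilistic residual bound stays active, and the Lipschitz/Gr\"onwall argument to convert geometric decay at the sampling instants into the continuous-time exponential envelope required by the paper's definition of exponential stability (this last step is exactly why the theorem adds the local Lipschitz assumption on $f_\mathrm{c}$, and you correctly identified it). Your treatment of the rational controller --- clearing the denominator via $u_\dd(z)u=L_\mathrm{n}(z)P^{-1}z$ and dividing at the end by the strictly positive $u_\dd$ --- is also the intended mechanism.

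Two small imprecisions, neither fatal: first, the intermediate inequality you state, $u_\dd(z)^2\bigl(V(\widehat{\Phi}(x_+))-V(\widehat{\Phi}(x))+\rho\|\widehat{\Phi}(x)\|^2\bigr)\leq 0$, is not quite what the bookkeeping produces --- the $(4,4)$-block $u_\dd(z)(P-\rho I_N)$ places the decrement on the \emph{successor} direction and, after the congruence with $P^{-1}$, in the form $\rho\,(\cdot)^\top P^{-2}(\cdot)$ rather than $\rho\|\cdot\|^2$, and the overall power of $u_\dd$ depends on how the congruence is normalized; since $P\succ 0$ and $u_\dd$ is bounded below by a positive constant, this only changes constants, but as written the displayed identity would not survive a line-by-line check. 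Second, for positive invariance it is the sampled state $x(k\Delta t)$ that must remain in $\bbX$ (that is where the residual of the discrete-time surrogate is evaluated), which your sublevel-set argument does deliver; your remark about $\Omega(c^*)$ lying in the Lipschitz neighborhood is a separate, additional requirement and you handle it correctly.
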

\begin{proof}
    See~\cite[Cor.~4]{strasser:berberich:allgower:2025}.
\end{proof}
In contrast to the continuous-time design in Theorem~\ref{thm:stability-continuous-time}, the discrete-time approach in Theorem~\ref{thm:stability-discrete-time} does not require measurements of the state derivative, which significantly improves its practicality.
Moreover,~\eqref{eq:sampled_data_controller} is a sampled-data controller, whereas Theorem~\ref{thm:stability-continuous-time} requires the implementation of a continuous-time state-feedback controller.
For a fixed denominator $u_\dd$,~\eqref{eq:stability-discrete-time} is an SOS program which can be tackled via standard solvers; compare Section~\ref{sec:numerics}.
The denominator $u_\dd$ can be chosen by the user a priori or may be optimized for a fixed $P$ and $\rho$.
Theorem~\ref{thm:stability-discrete-time} provides a significant improvement over the results in~\cite{strasser:schaller:worthmann:berberich:allgower:2024b}, which over-approximate the bilinearity in the surrogate model.
In particular, the SOS-based controller design in Theorem~\ref{thm:stability-discrete-time} leads to an increased guaranteed RoA and can tolerate larger error bounds~\eqref{eq:proportional-error-bound}.
In the case of asymptotic stability, rather than exponential stability, both Theorems~\ref{thm:stability-continuous-time},~\ref{thm:stability-discrete-time} can be relaxed by setting $\rho=0$.
\begin{remark}
    The probabilistic tolerance $\delta$ serves as a hyperparameter in Theorem~\ref{thm:stability-continuous-time} and Theorem~\ref{thm:stability-discrete-time} and is only indirectly used in the respective SOS programs.
    In particular, $\delta$ affects the values of $\bar{c}_x$, $\bar{c}_u$ on the learning error of Proposition~\ref{prop:proportional-error-bound}, i.e., larger values of $\delta$ correspond to tighter error bounds, but less certainty that the true learning error is captured.
    Since the designed controller is only robust against residuals contained in the considered proportional error bound, the controller cannot possibly stabilize the true system. 
    On the other hand, smaller values of $\delta$ yield more accurate error representations with possibly larger constants $c_x$, $c_u$, and thus the robust controller design may not be feasible.
\end{remark}
\section{Koopman-based control with performance guarantees}\label{sec:controller-design-performance}
In this section, we extend the controller design of Section~\ref{sec:controller-design} to address performance specifications for the closed-loop nonlinear system. 
We only address the problem in continuous time and leave the discrete-time treatment for future research.
We introduce the performance input $w\in\bbR^p$ and output $y\in\bbR^q$ as
\begin{subequations}\label{eq:dynamics-nonlinear-performance}
    \begin{align}
        \dot{x} &= f_\mathrm{c}(x,u) + B_w w, \\
        y &= C \widehat{\Phi}(x) + D u + \tD (u \kron \widehat{\Phi}(x)) + D_w w.
    \end{align}
\end{subequations}
We emphasize that the output is restricted to depend bilinearly on the lifted state $\widehat{\Phi}(x)$ and the input $u$, but linearly on the disturbance $w$.
Here, $y$ is chosen by the user and serves as the measure on which the effect of $w$ should be minimized, e.g., the system's state $x$, included in $\widehat{\Phi}(x)$.
For the nonlinear dynamics~\eqref{eq:dynamics-nonlinear-performance}, we follow the discussion of Section~\ref{sec:bilinear-surrogate} and employ Koopman theory to derive a higher-dimensional surrogate model. 
More precisely, we recall the action $\ddt{}\Phi(x) = (\cL^u\Phi)(x)$ of the Koopman generator associated with the nominal system, that is, \eqref{eq:dynamics-nonlinear-performance} with $w=0$. 
Then, we decompose the nominal and disturbed part of the dynamics~\eqref{eq:dynamics-nonlinear-performance} as 
\begin{align}\label{eq:lifted-dynamics-performance-Phi}
    \ddt{}\Phi(x) 
    = (\cL^u\Phi)(x) + \grad \Phi(x)^\top B_w w.
\end{align}
For the first summand, using~\eqref{eq:bilinear-surrogate} yields 
\begin{equation}
    (\cL^u\Phi)(x) 
    = \begin{bmatrix}
        0 \\ A \widehat{\Phi}(x) + B_0 u + \tB (u\kron \widehat{\Phi}(x)) + r(x,u)
    \end{bmatrix}.
\end{equation}
The second summand of~\eqref{eq:lifted-dynamics-performance-Phi} satisfies 
\begin{equation}
    \grad \Phi(x)^\top B_w w 
    = \begin{bmatrix}
        0 \\ \grad \widehat{\Phi}(x)^\top B_w w 
    \end{bmatrix}.
\end{equation}
due to the constant observable $\phi_0(x)\equiv 1$.
Hence, the lifted bilinear surrogate model of the disturbed nonlinear system~\eqref{eq:dynamics-nonlinear-performance} reads
\begin{subequations}\label{eq:dynamics-lifted-performance}
    \begin{align}
        \ddt{}\widehat{\Phi}(x) 
        &= A\widehat{\Phi}(x) + B_0 u + \tB(u\kron \widehat{\Phi}(x)) 
        \nonumber\\ &\qquad 
        + r(x,u) + B_w(\widehat{\Phi}(x)) w, \\
        y &= C \widehat{\Phi}(x) + D u + \tD(u\kron \widehat{\Phi}(x)) + D_w w
    \end{align} 
\end{subequations}
with
\begin{equation}\label{eq:performance-Bwz}
    B_w(z) = \grad \widehat{\Phi}\left(\begin{bmatrix}I_n & 0_{n\times N-n}\end{bmatrix}z\right)^\top B_w.
\end{equation}
To rely on SOS techniques, we restrict ourselves to polynomial observable functions $\widehat{\Phi}(x)$ when studying closed-loop performance.
In particular, $B_w(z)$ is a polynomial in $z$ for any polynomial $\widehat{\Phi}(x)$.
Further, we consider the following quadratic performance specification.
\begin{definition}\label{def:quadratic-performance}
    The system~\eqref{eq:dynamics-lifted-performance} admits quadratic performance with supply rate 
    \begin{equation}\label{eq:supply-rate}
        s(w,y) 
        = \begin{bmatrix}
            w \\ y    
        \end{bmatrix}^\top 
        \begin{bmatrix}
            Q_w & S_w \\ S_w^\top & R_w
        \end{bmatrix}
        \begin{bmatrix}
            w \\ y    
        \end{bmatrix},
    \end{equation}
    where $R_w\succ 0$, $Q_w \prec 0$, if there exist $\varepsilon,\delta>0$ such that
    \begin{equation}\label{eq:quadratic-performance}
            \int_{0}^\infty 
            s(w(t),y(t))
            \dd t 
            \leq 
            - \varepsilon
            \int_{0}^\infty 
            \|w(t)\|^2
            \dd t
    \end{equation}
    for all $w\in\cL_2$.
\end{definition}
For instance, $Q_w = -\gamma^2 I$, $S_w = 0$, and $R_w=I$ correspond to an $\cL_2$-gain\footnote{The $\cL_2$-gain of operator $H$ is defined by the smallest possible $\gamma>0$ such that $\|H(u)\|_{\cL_2} \leq \gamma \|u\|_{\cL_2}$ for all square-integrable functions $u$.} bound $\gamma$ on the channel $w\mapsto y$. 
The following theorem establishes a controller design method guaranteeing exponential stability and quadratic performance of the lifted system~\eqref{eq:dynamics-lifted-performance}.

\begin{theorem}\label{thm:performance-continuous-time}
    Let the proportional error bound~\eqref{eq:proportional-error-bound} hold with some $c_x,c_u>0$ for all $x\in\bbR^n$.
    If there exist $\alpha\in\bbN$, $\beta\in\bbN_0$ with $\alpha\geq \max\{1,\beta\}$, $P=P^\top\succ 0$ of size $N\times N$, $L(z)\in\bbR[z,2\alpha-1]^{m\times N}$, $\tau,\lambda\in\SOS_+[z,2\beta]$, and $\rho,\eta>0$ such that~\eqref{eq:performance-continuous-time} with $Q(z)$ in~\eqref{eq:stability-continuous-time}
    \begin{figure*}[!t]
        \scriptsize
        \begin{equation}\label{eq:performance-continuous-time}
            \left[
                \def\arraystretch{1.15}\begin{array}{ccc|cc}
                    \multicolumn{3}{c|}{\multirow{3}{*}{$Q(z)$}}
                    & -\lambda(z) B_w(z) - \left(C P + DL(z) + \tD(L(z)\kron z)\right)^\top S_w^\top
                    & -\left(C P + DL(z) + \tD(L(z)\kron z)\right)^\top
                    \\
                    & &
                    & 0
                    & 0
                    \\
                    & &
                    & 0
                    & 0
                    \\\hline
                    \star 
                    & \star
                    & \star
                    & -\lambda(z) \left(Q_w + S_wD_w + D_w^\top S_w^\top\right) - \eta I_p
                    & -\lambda(z) D_w^\top
                    \\
                    \star
                    & \star 
                    & \star
                    & \star
                    & \lambda(z) R_w^{-1}
                \end{array}
    	   \right]
            \in\SOS[z,2\alpha]^{2N + m + p + q}
        \end{equation}
        \normalsize
        \medskip
        \vspace*{-0.6\baselineskip}
        \hrule
        \vspace*{-0.6\baselineskip}
    \end{figure*}
    holds with $z\in\bbR^N$, then the controller $\mu(x) = L(\widehat{\Phi}(x)) P^{-1} \widehat{\Phi}(x)$ ensures global exponential stability and quadratic performance of the origin of the continuous-time nonlinear system~\eqref{eq:dynamics-nonlinear-performance}.
\end{theorem}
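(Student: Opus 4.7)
The plan is to mimic the Lyapunov-dissipation argument behind Theorem~\ref{thm:stability-continuous-time}, augmenting it with the performance supply rate $s(w,y)$. I would take as storage function $V(x)=\widehat{\Phi}(x)^\top P^{-1}\widehat{\Phi}(x)$, abbreviate $z=\widehat{\Phi}(x)$, and target the dissipation inequality
\begin{equation*}
    \dot V + s(w,y) + \eta\|w\|^2 + \rho\|z\|^2 \leq 0
\end{equation*}
on all of $\bbR^n\times\bbR^p$. Integrating it from $0$ to $\infty$ along trajectories of~\eqref{eq:dynamics-lifted-performance} with $V(x(\cdot))\geq 0$ yields~\eqref{eq:quadratic-performance} with $\varepsilon=\eta$, while specializing to $w\equiv 0$ gives $\dot V\leq -\rho\|z\|^2\leq -\rho\lambda_{\min}(P)V$, which, combined with the sandwich bound from~\eqref{eq:observables-Lipschitz}, delivers global exponential stability of the nonlinear system as in Theorem~\ref{thm:stability-continuous-time}.

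Substituting $u=\mu(x)=L(z)P^{-1}z$ into~\eqref{eq:dynamics-lifted-performance} and expanding both $\dot V$ and $y$, the target inequality becomes a quadratic form in the block vector $(z,r,\mu,w,y)$. I would first relax the residual by the S-procedure with multiplier $\tau(z)\in\SOS_+[z,2\beta]$ applied to the squared version of~\eqref{eq:proportional-error-bound}, i.e., $\|r\|^2\leq 2c_x^2\|z\|^2+2c_u^2\|\mu\|^2$, which reproduces the block $Q(z)$ already appearing in~\eqref{eq:stability-continuous-time}. The $w$- and $y$-blocks are then generated by: (a) applying a Schur complement to the $y^\top R_w y$ contribution of $s(w,y)$ using $R_w\succ 0$, which introduces the $R_w^{-1}$ entry in the $(5,5)$ block; (b) collecting the $w$-quadratic part of $s(w,y)$ together with the strictness margin $\eta\|w\|^2$ in the $(4,4)$ block; and (c) placing the $w$--$y$ and state--$w$ cross contributions in the off-diagonal positions, yielding in particular the $-B_w(z)$ and $-(CP+DL(z)+\tD(L(z)\kron z))^\top S_w^\top$ terms shown.

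The SOS multiplier $\lambda(z)\in\SOS_+[z,2\beta]$ plays the standard role of clearing the rational factor $1/\lambda(z)$ that would otherwise appear after Schur-complementing against the state-dependent polynomial data, yielding a polynomial matrix that can be certified by SOS. A congruence transformation with $\mathrm{diag}(P,I,I,I,I)$ finally converts the analysis inequality (naturally written in $P^{-1}$ and $K(z)=L(z)P^{-1}$) into the synthesis form in~\eqref{eq:performance-continuous-time}, so that $AP+B_0L(z)+\tB(L(z)\kron z)$ and $CP+DL(z)+\tD(L(z)\kron z)$ appear in the diagonal and cross positions as displayed. Since the resulting matrix is SOS in $z$, it is positive semidefinite for every $z\in\bbR^N$; reversing the Schur complement and the congruence then restores exactly the pointwise dissipation inequality above.

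The main obstacle is the careful bookkeeping when combining the bilinear closed-loop term $\tB((K(z)z)\kron z)$, the $\tau$-relaxation, the $\lambda$-clearing, and the Schur complement on $R_w$ into a single SOS certificate while retaining enough strictness to absorb $\rho\|z\|^2$ and $\eta\|w\|^2$ on the right-hand side. A second subtlety, in contrast to Theorem~\ref{thm:stability-continuous-time}, is that the guarantees are \emph{global}: this relies on the stronger hypothesis that~\eqref{eq:proportional-error-bound} holds on all of $\bbR^n$ and on the restriction to polynomial observables $\widehat{\Phi}$, so that $B_w(z)$ defined in~\eqref{eq:performance-Bwz} is polynomial and the SOS matrix in~\eqref{eq:performance-continuous-time} is well-defined; one must verify that this polynomial structure is preserved throughout all manipulations.
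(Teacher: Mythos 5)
Your overall architecture matches the paper's proof: the same storage function $V(x)=\widehat{\Phi}(x)^\top P^{-1}\widehat{\Phi}(x)$, the S-procedure with multiplier $\tau$ against the squared residual bound, Schur complements to peel off the $R_w^{-1}$ block and the $\tau$-dependent blocks, a congruence transformation to pass between the synthesis variables $(P,L(z))$ and the analysis variables $(P^{-1},K(z)=L(z)P^{-1})$, and finally integration of a dissipation inequality. However, there is one concrete step that fails as written: your target inequality $\dot V + s(w,y)+\eta\|w\|^2+\rho\|z\|^2\leq 0$ with $\varepsilon=\eta$ is \emph{not} what~\eqref{eq:performance-continuous-time} certifies. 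You describe $\lambda(z)$ as merely ``clearing the rational factor'' so that the matrix becomes polynomial, but $\lambda(z)$ enters the congruence transformation as a genuine state-dependent scaling of the disturbance channel (the $w$-row is scaled by $1/\lambda(z)$, not by $1$). Consequently, the pointwise inequality one actually obtains is
\begin{equation*}
    \dot V(x) \leq -\rho\,\widehat{\Phi}(x)^\top P^{-2}\widehat{\Phi}(x) - \frac{\eta}{\lambda(\widehat{\Phi}(x))^2}\|w\|^2 - \frac{1}{\lambda(\widehat{\Phi}(x))}\, s(w,y),
\end{equation*}
i.e., the supply rate and the strictness margin are weighted by state-dependent factors. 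To recover Definition~\ref{def:quadratic-performance} you must additionally invoke $\lambda^*=\min_{z\in\bbR^N}\lambda(z)>0$ (guaranteed by $\lambda\in\SOS_+[z,2\beta]$) and take $\varepsilon=\eta/\lambda^*$ rather than $\varepsilon=\eta$; with $\lambda$ non-constant your claimed unweighted dissipation inequality simply does not follow from the SOS condition.

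A second, smaller inaccuracy: the exponential-stability rate you quote, $\dot V\leq-\rho\|z\|^2$, should read $\dot V\leq -\rho\, z^\top P^{-2}z$, since the $-\rho I_N$ term in $Q(z)$ is hit twice by the congruence with $P^{-1}$; this only changes constants and does not affect the conclusion. Everything else --- the decomposition reproducing $Q(z)$, the Schur complement on $R_w\succ 0$, the placement of the $S_w$ cross terms, the reliance on polynomial $\widehat{\Phi}$ so that $B_w(z)$ in~\eqref{eq:performance-Bwz} is polynomial, and the observation that globality rests on assuming~\eqref{eq:proportional-error-bound} on all of $\bbR^n$ --- is consistent with the paper's argument.
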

\begin{proof}
    See Appendix~\ref{sec:app:proof-performance}.
\end{proof}
For the special case of bilinear dynamics in~\eqref{eq:dynamics-nonlinear-performance}, Theorem~\ref{thm:performance-continuous-time} establishes a controller design with guaranteed global exponential stability and quadratic performance for the particular choice $\widehat{\Phi}(x)=x$.
Here, the system dynamics can be either known or estimated using the standard dynamic mode decomposition~\cite{schmid:2010} without lifting.
We note that~\cite{strasser:berberich:allgower:2023b} provides an alternative controller design for bilinear systems with performance guarantees, but the therein developed framework considers only nominal bilinear systems, i.e., it cannot cope with uncertainty. 
Further, it employs an over-approximation of the bilinearity, leading to only \emph{local} performance guarantees and significant conservatism in comparison to the SOS-based approach in Theorem~\ref{thm:performance-continuous-time}.

Theorem~\ref{thm:performance-continuous-time} provides global guarantees assuming that the error bound~\eqref{eq:proportional-error-bound} holds.
However, this error bound can only be guaranteed on a compact set $\bbX$ from which the data are sampled (with probability $1-\delta$).
Therefore, when applying Theorem~\ref{thm:performance-continuous-time} to the nonlinear system~\eqref{eq:dynamics-nonlinear-performance}, it needs to be ensured that the closed-loop trajectory remains within $\bbX$.
This will be shown in the following, assuming that the following mild technical assumption holds.
\begin{assumption}\label{ass:lower-bound-supply-rate}
    There exists a function $\alpha\in\cK$ such that $s(w,y) \geq - \alpha(\|w\|^2)$.\footnote{We define the function class $\cK$ to contain all functions $\alpha:[0,\infty)\to[0,\infty)$ which are continuous, strictly increasing, and satisfy $\alpha(0)=0$.}
\end{assumption}
Assumption~\ref{ass:lower-bound-supply-rate} is satisfied for commonly used supply rates, e.g., the supply rate corresponding to an $\cL_2$-gain bound satisfies $s(w,y)\geq -\gamma^2\|w\|^2$. 
To derive guarantees on a compact set, we restrict ourselves to local performance, i.e., to $w\in \cL_2$ with $\|w\|^2\leq \nu$ for some $\nu > 0$. 
This bound on $w$ is used to ensure robust positive invariance of the set $\Omega(c^*)\subseteq\bbX$, where $\Omega$ and $c^*$ are defined in~\eqref{eq:stability-Omega-c-star}.
\begin{corollary}\label{cor:performance-continuous-time-SafEDMD}
    Let the assumptions of Proposition~\ref{prop:proportional-error-bound} and Assumption~\ref{ass:lower-bound-supply-rate} hold and let a probabilistic tolerance $\delta \in (0,1)$ be given.
    If there exist $\alpha\in\bbN$, $\beta\in\bbN_0$ with $\alpha\geq \max\{1,\beta\}$, $P=P^\top\succ 0$ of size $N\times N$, $L(z)\in\bbR[z,2\alpha-1]^{m\times N}$, $\tau,\lambda\in\SOS_+[z,2\beta]$, and $\rho,\eta>0$ such that~\eqref{eq:performance-continuous-time} with $Q(z)$ in~\eqref{eq:stability-continuous-time} holds with $z\in\bbR^N$, then the controller $\mu(x) = L(\widehat{\Phi}(x)) P^{-1} \widehat{\Phi}(x)$ ensures exponential stability and quadratic performance of the origin of the continuous-time nonlinear system~\eqref{eq:dynamics-nonlinear-performance} for all initial conditions in $\Omega(c^*)$ with probability $1-\delta$, where $\Omega$ and $c^*$ are defined in~\eqref{eq:stability-Omega-c-star}.
\end{corollary}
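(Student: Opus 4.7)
The plan is to localize the proof of Theorem~\ref{thm:performance-continuous-time} to the sampling region $\bbX$ on which the residual bound is valid. Since the SOS condition~\eqref{eq:performance-continuous-time} in Corollary~\ref{cor:performance-continuous-time-SafEDMD} is structurally identical to the one in Theorem~\ref{thm:performance-continuous-time}, the same $S$-procedure and Schur complement manipulations---invoking~\eqref{eq:proportional-error-bound} via the multipliers $\tau(z)$ and $\lambda(z)$---yield the pointwise dissipation inequality
\begin{equation*}
    \ddt{}V(x) + s(w,y) \leq -\rho\,\widehat{\Phi}(x)^\top P^{-2}\widehat{\Phi}(x) - \eta\|w\|^2
\end{equation*}
with the Lyapunov candidate $V(x) = \widehat{\Phi}(x)^\top P^{-1}\widehat{\Phi}(x)$, provided the residual satisfies~\eqref{eq:proportional-error-bound}. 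By Proposition~\ref{prop:proportional-error-bound}, this bound holds for all $x\in\bbX$ on an event of probability at least $1-\delta$, and I would carry out the rest of the argument on this event.

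Second, I would establish robust positive invariance of $\Omega(c^*)\subseteq\bbX$ under the closed loop for $\cL_2$-disturbances with small enough energy $\|w\|_{\cL_2}^2 \leq \nu$ as discussed in the text preceding the corollary. Applying Assumption~\ref{ass:lower-bound-supply-rate} to bound $s(w,y)\geq -\tilde{\alpha}(\|w\|^2)$ for some class-$\cK$ function $\tilde{\alpha}$ turns the dissipation inequality into
\begin{equation*}
    \ddt{}V(x) \leq \tilde{\alpha}(\|w\|^2) - \rho\,\widehat{\Phi}(x)^\top P^{-2}\widehat{\Phi}(x)
\end{equation*}
whenever $x\in\bbX$. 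A contradiction argument at the first time $V(x(t))$ reaches $c^*$---combined with the definition~\eqref{eq:stability-c-star} of $c^*$ as the maximal $V$-sublevel set contained in $\bbX$---shows that choosing $\nu$ sufficiently small in terms of $c^* - V(\hat{x})$, $\rho$, and the extreme eigenvalues of $P$ prevents $V$ from crossing the level $c^*$. Consequently, $x(t)\in\Omega(c^*)\subseteq\bbX$ for all $t\geq 0$ and the dissipation inequality is valid along the entire trajectory.

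Third, integrating the dissipation inequality from $0$ to $\infty$ and using $V(x(t))\geq 0$ gives
\begin{equation*}
    \int_0^\infty s(w(t),y(t))\dd t \leq V(\hat{x}) - \eta\int_0^\infty \|w(t)\|^2\dd t,
\end{equation*}
which is exactly the quadratic performance condition~\eqref{eq:quadratic-performance} from Definition~\ref{def:quadratic-performance}. Setting $w\equiv 0$ in the dissipation inequality and using $\|x\|\leq\|\widehat{\Phi}(x)\|$ from~\eqref{eq:observables-Lipschitz} yields exponential decay of $\|x(t)\|$ via a standard Lyapunov/Gronwall argument, establishing exponential stability on $\Omega(c^*)$. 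The main obstacle is the invariance step: Theorem~\ref{thm:performance-continuous-time} enjoys a global residual bound and therefore needs no such argument, whereas here the admissible disturbance magnitude $\nu$ has to be made explicit enough---or, equivalently, $\Omega(c^*)$ shrunk enough---to rule out transient escapes through the boundary of $\Omega(c^*)$ under worst-case $\cL_2$-disturbance trajectories.
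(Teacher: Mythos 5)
Your overall strategy coincides with the paper's: localize the dissipation inequality of Theorem~\ref{thm:performance-continuous-time} to $\bbX$ on the probability-$(1-\delta)$ event where \eqref{eq:proportional-error-bound} holds, establish robust positive invariance of $\Omega(c^*)\subseteq\bbX$ via Assumption~\ref{ass:lower-bound-supply-rate}, and integrate to recover \eqref{eq:quadratic-performance} and exponential stability through \eqref{eq:observables-Lipschitz}. Steps one and three of your proposal are fine (up to the detail that the derived dissipation inequality carries the multiplier $\lambda(\widehat{\Phi}(x))^{-1}$ in front of $s(w,y)$ and $\eta\lambda(\widehat{\Phi}(x))^{-2}$ in front of $\|w\|^2$, which the paper removes using $\lambda^*=\min_{z}\lambda(z)>0$).

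The gap is in the invariance step. You bound the disturbance by its energy, $\|w\|_{\cL_2}^2\leq\nu$, and propose a first-crossing contradiction with $\nu$ chosen ``in terms of $c^*-V(\hat{x})$.'' Two problems arise. First, for a general class-$\cK$ function $\alpha$ from Assumption~\ref{ass:lower-bound-supply-rate}, the accumulated term $\int_0^T \alpha(\|w(t)\|^2)\,\dd t$ is not controlled by the $\cL_2$ energy of $w$ (only by its pointwise amplitude), so the contradiction cannot be closed except for special supply rates such as the $\cL_2$-gain case where $\alpha$ is linear. Second, making $\nu$ depend on the gap $c^*-V(\hat{x})$ degenerates to $\nu=0$ for initial conditions on the boundary of $\Omega(c^*)$, whereas the corollary asserts guarantees for \emph{all} initial conditions in $\Omega(c^*)$. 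The paper avoids both issues by imposing the pointwise amplitude bound $\|w(t)\|^2\leq\nu:=\alpha^{-1}(\lambda^* c^*)$, combining the resulting bound $\dot{V}\leq-\beta V+\lambda(\widehat{\Phi}(x))^{-1}\alpha(\|w\|^2)$ with $\beta=\rho\|P\|_2^3$ and the integrating factor $e^{\beta t}$, and concluding $V(x(t+\chi))\leq e^{-\beta\chi}V(x(t))+c^*\left[1-e^{-\beta\chi}\right]\leq c^*$ whenever $V(x(t))\leq c^*$ --- a comparison-lemma argument that is uniform over $\Omega(c^*)$ including its boundary. To repair your proof you would need to replace the energy constraint by this amplitude constraint (which is also how the restriction preceding the corollary is meant) and run the explicit comparison estimate rather than a first-crossing argument.
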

\begin{proof}
    See Appendix~\ref{sec:app:proof-performance-corollary}.
\end{proof}
Corollary~\ref{cor:performance-continuous-time-SafEDMD} shows that the global stability and performance guarantees from Theorem~\ref{thm:performance-continuous-time} translate into local guarantees when the bilinear surrogate model is estimated using the SafEDMD learning architecture.
The guaranteed region of attraction $\Omega(c^*)$ is the largest Lyapunov sublevel set in $\bbX$.
The proof of Corollary~\ref{cor:performance-continuous-time-SafEDMD} shows that $\Omega(c^*)$ is robust positive invariant under the uncertain bilinear surrogate model and the (bounded) performance input $w$, thus guaranteeing that the state of the nonlinear system~\eqref{eq:dynamics-nonlinear-performance} remains in $\bbX$ for all times.

We note that the continuous-time controller design approaches in Theorem~\ref{thm:stability-continuous-time}, Theorem~\ref{thm:performance-continuous-time}, and Corollary~\ref{cor:performance-continuous-time-SafEDMD} use a polynomial control law, while the discrete-time approach in Theorem~\ref{thm:stability-discrete-time} employs a rational controller parametrization since polynomial controllers cannot globally stabilize bilinear discrete-time systems~\cite{vatani:hovd:olaru:2014}.
Considering rational controllers in continuous time may improve the flexibility of the controller parameterization and, thereby, enhance the feasibility of the conditions in Theorem~\ref{thm:stability-discrete-time}.
Such an extension is straightforward by combining the ideas from Theorems~\ref{thm:stability-continuous-time} and~\ref{thm:stability-discrete-time} and, therefore, we omit it for space reasons.
\section{Numerical example}\label{sec:numerics}
In this section, we demonstrate the advantages of the presented framework by comparing our SOS-based controller design methods in both continuous and discrete time with the corresponding LMI-based approaches in~\cite{strasser:schaller:worthmann:berberich:allgower:2025,strasser:schaller:worthmann:berberich:allgower:2024b}, respectively.
Further, we apply the results from Section~\ref{sec:controller-design-performance} on achieving closed-loop performance.
The corresponding semidefinite programs are solved in Matlab using the toolbox YALMIP~\cite{lofberg:2004} with its SOS module~\cite{lofberg:2009} and the solver MOSEK~\cite{mosek:2022}.\footnote{We provide the code for the simulation examples via \href{https://github.com/rstraesser/SafEDMD-SOS}{https://github.com/rstraesser/SafEDMD-SOS}.}

We consider the nonlinear system used, e.g., in~\cite{brunton:brunton:proctor:kutz:2016}, with dynamics
\begin{subequations}\label{eq:numerics-dynamics}
    \begin{align}
        \dot{x}_1(t) &= \rho x_1(t), \\
        \dot{x}_2(t) &= \lambda(x_2(t) - x_1(t)^2) + u(t)
    \end{align}
\end{subequations}
for $\rho,\lambda\in\bbR$, where we use $\rho=-2$ and $\lambda = 1$ in our simulations.
Since the system dynamics are unknown and only data samples $\cD$ are available, we leverage the SafEDMD procedure elaborated in Section~\ref{sec:bilinear-surrogate} to derive a Koopman-based bilinear surrogate model with certified error bounds. 
To this end, we choose the observable function 
\begin{equation}\label{eq:numerics-observables}
    \Phi(x) = \begin{bmatrix}
        1 & x_1 & x_2 & x_2 - \frac{\lambda}{\lambda-2\rho} x_1^2 & x_1x_2
    \end{bmatrix}^\top.
\end{equation}
Then, we obtain a data-driven bilinear surrogate model in continuous and discrete time according to the discussion in Section~\ref{sec:bilinear-surrogate-continuous} based on state-derivative data~\eqref{eq:data-continuous} and the discussion in Section~\ref{sec:bilinear-surrogate-discrete} based on derivative-free data~\eqref{eq:data-discrete} with sampling rate $\Delta t=\SI{0.01}{}$, respectively.
Here, we draw $d=200$ i.i.d.~data points w.r.t.\ the uniform distribution on the set $\bbX=[-1,1]^2$ for each constant control input $u(t)\equiv \bar{u}\in\{0,1\}$.
For the residual $r$ of the SafEDMD-based surrogate model, we consider Proposition~\ref{prop:proportional-error-bound} and assume the proportional error bound~\eqref{eq:proportional-error-bound} with probability $1-\delta$ for $\delta=0.05$ and 
\begin{equation}\label{eq:numerics-constants-error-bound}
    c_x=c_u = \begin{cases}
        0.1 & \text{continuous time}, \\
        0.006 & \text{discrete time}.
    \end{cases}
\end{equation}
We note that the constants in~\eqref{eq:numerics-constants-error-bound} are the maximal values for which the LMI-based controllers~\cite{strasser:schaller:worthmann:berberich:allgower:2025,strasser:schaller:worthmann:berberich:allgower:2024b}, to which we compare our proposed SOS-based controller, are feasible.
Note that the maximal feasible values of $c_x$ and $c_u$ directly dictate requirements on the employed $\delta$, $d$, and $\Delta t$; cf. Proposition~\ref{prop:proportional-error-bound}.
Then, we solve the proposed SOS programs in Theorem~\ref{thm:stability-continuous-time} and Theorem~\ref{thm:stability-discrete-time} for $\alpha=\beta=1$, where for the latter we choose the denominator $u_\dd\in\SOS_+[\widehat{\Phi}(x),2]$ by including all monomials of degree two.
Figure~\ref{fig:numerics-RoAs} shows the resulting guaranteed RoA $\Omega$ of the \emph{unknown} nonlinear system~\eqref{eq:numerics-dynamics} for both continuous-time and discrete-time surrogate models. 
Further, we depict the resulting guaranteed RoA for the LMI-based controller designs in~\cite{strasser:schaller:worthmann:berberich:allgower:2025,strasser:schaller:worthmann:berberich:allgower:2024b}, which over-approximate the bilinear term inside a user-chosen ellipsoidal region that serves as a superset of the RoA. 
In particular, we apply the approaches from~\cite{strasser:schaller:worthmann:berberich:allgower:2025,strasser:schaller:worthmann:berberich:allgower:2024b} for all $\|\widehat{\Phi}(x)\|^2\leq R_z$ and maximize $R_z$ while retaining feasibility of the approaches.
\begin{figure}[t]
    \centering
    \input{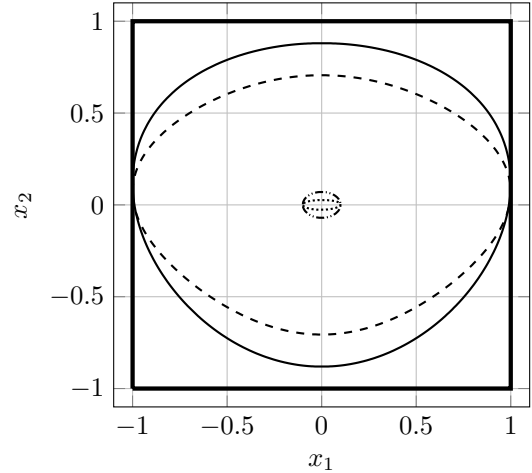}
    \caption{Sampling region $\bbX$~\eqref{plot:compactX}, RoA for the continuous-time surrogate model with SOS controller~\eqref{plot:RoA-continuous-time-SOS} and LMI controller~\eqref{plot:RoA-continuous-time-LMI}, and RoA for the discrete-time surrogate model with SOS controller~\eqref{plot:RoA-discrete-time-SOS} and LMI controller~\eqref{plot:RoA-discrete-time-LMI}.}
    \label{fig:numerics-RoAs}
\end{figure}
Note that the LMI-based approach is only able to guarantee closed-loop stability in a small subset of the sampling region $\bbX$.
In contrast, the proposed SOS-based controllers outperform the LMI-based methods in both continuous and discrete time by achieving a significantly larger guaranteed RoA $\Omega(c^*)\subseteq\bbX$ and by being feasible for larger constants $c_x$, $c_u$ in the proportional error bound~\eqref{eq:proportional-error-bound}.
The latter property is especially advantageous since a larger error bound implies that fewer data samples are required; compare Proposition~\ref{prop:proportional-error-bound}.
Note that the RoA of the closed-loop system in continuous time is larger than in discrete time since the discretization induces an additional approximation step.
On the other hand, the discrete-time approach has the significant practical advantage of not requiring state-derivative measurements and stabilizing the systems via sampled-data feedback.
Note that although the choice of $\Phi$ in~\eqref{eq:numerics-observables} affects the feasibility of controller designs and the shape of the resulting RoAs, the behavior discussed above can also be observed for other observable functions.
\begin{remark}
    The considered nonlinear system in~\eqref{eq:numerics-dynamics}, which was proposed in~\cite{brunton:brunton:proctor:kutz:2016}, has a finite-dimensional linear Koopman representation when choosing the observable function $\Psi(x) = \begin{bmatrix}
        1 & x_1 & x_2 & x_2 - \frac{\lambda}{\lambda-2\rho} x_1^2
    \end{bmatrix}^\top$. 
    In order to obtain a more realistic and practical scenario, we use $\Phi(x)=\begin{bmatrix}
        \Psi(x)^\top & x_1 x_2
    \end{bmatrix}^\top$ in~\eqref{eq:numerics-observables}, which does not admit a finite-dimensional representation. 
    Although we consider this example system for its tutorial value, we emphasize that the proposed methods are applicable more generally.
    For further numerical examples illustrating the proposed SOS-based controller, e.g., with a zone temperature process or an inverted pendulum, we refer to~\cite{strasser:berberich:allgower:2025}.
    Applying the controller design to more complex nonlinear systems is an interesting topic for future research~\cite{peitz:otto:rowley:2020,bevanda:driessen:iacob:toth:sosnowski:hirche:2024}.
\end{remark}

Next, we investigate the achievable performance of the continuous-time controller based on the controller design in Section~\ref{sec:controller-design-performance}.
In particular, we consider the dynamics~\eqref{eq:numerics-dynamics} with the additional performance input $w\in\bbR^2$ and performance output $y=x\in\bbR^2$, yielding the structure in~\eqref{eq:dynamics-nonlinear-performance} for $B_w=I_2$, $C=\begin{bmatrix}I_2 & 0_{2\times 2} \end{bmatrix}$, $D=0$, $\tD=0$, and $D_w=0$.
Note that the respective matrix $B_w(z)$ for the lifted dynamics~\eqref{eq:dynamics-lifted-performance} is obtained via~\eqref{eq:performance-Bwz}. 
The desired performance goal is to minimize the guaranteed $\cL_2$-gain bound $\gamma$, i.e., we choose $Q_w = -\gamma^2 I$, $S_w=0$, and $R_w = I$.
Figure~\ref{fig:numerics-L2-gain} visualizes the guaranteed bound on the $\cL_2$-gain bound with the controller design in Corollary~\ref{cor:performance-continuous-time-SafEDMD} for different constants $c_x$, $c_u$ in the proportional error bound~\eqref{eq:proportional-error-bound} on the residual of the SafEDMD-based surrogate model. 
Note that we restrict ourselves to the choice $c_x=c_u$ for the sake of illustration.
As expected, the minimum $\cL_2$-gain bound which can be guaranteed increases with $c_x$, $c_u$, i.e., increasing uncertainty in the surrogate model deteriorates the guaranteed closed-loop performance.
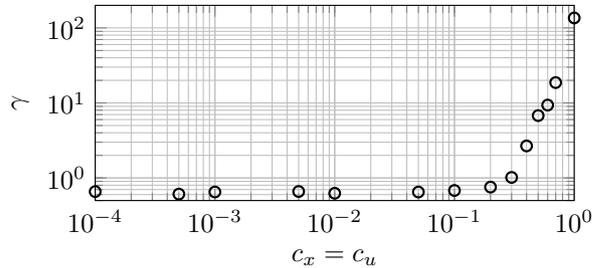
\begin{figure}[t]
    \centering
    % This file was created by matlab2tikz.
%
%The latest updates can be retrieved from
%  http://www.mathworks.com/matlabcentral/fileexchange/22022-matlab2tikz-matlab2tikz
%where you can also make suggestions and rate matlab2tikz.
%
\begin{tikzpicture}

\begin{axis}[%
width=0.95\columnwidth,
height=0.5\columnwidth,
xmode=log,
xmin=0.0001,
xmax=1,
xminorticks=true,
xlabel={$c_x=c_u$},
ymode=log,
ymin=0.5,
ymax=200,
yminorticks=true,
ylabel=$\gamma$,
axis background/.style={fill=white},
xmajorgrids,
xminorgrids,
ymajorgrids,
yminorgrids
]
\addplot [black, thick, only marks, mark=o, mark options={solid, black}]
  table[row sep=crcr]{%
1	135.821914672852\\
0.7	18.6718851327896\\
0.6	9.3632698059082\\
0.5	6.77865982055664\\
0.4	2.67246246337891\\
0.3	1.01563453674316\\
0.2	0.755453109741211\\
0.1	0.680665969848632\\
0.05	0.649471282958984\\
0.01	0.625009536743164\\
0.005	0.659799575805664\\
0.001	0.650299072265625\\
0.0005	0.609306335449218\\
0.0001	0.657955169677735\\
};
\end{axis}
\end{tikzpicture}%
    \caption{Guaranteed $\cL_2$-gain bound $\gamma$ of Corollary~\ref{cor:performance-continuous-time-SafEDMD} for different values of $c_x=c_u$ in the proportional error bound~\eqref{eq:proportional-error-bound}.}
    \label{fig:numerics-L2-gain}
\end{figure}
\section{Conclusion}\label{sec:conclusion}
We presented a framework for designing data-driven controllers for unknown nonlinear systems via the Koopman operator.
The proposed approach relied on the SafEDMD framework, which provides a bilinear surrogate model accompanied by rigorous error bounds.
Next, we showed how the resulting uncertain bilinear system can be robustly stabilized, yielding a controller which provably stabilizes the underlying nonlinear system.
We addressed both continuous- and discrete-time setups.
Further, we extended the existing SafEDMD framework by incorporating performance objectives into the controller design.
This allowed us to design controllers based on data which guarantee, for example, a given $\mathcal{L}_2$-gain bound for the controlled nonlinear system.
While we addressed performance only in continuous-time, extending these results to the discrete-time setup is an interesting future research direction.

\begin{funding}
  F.\ Allgöwer is thankful that this work was funded by the Deutsche Forschungsgemeinschaft (DFG, German Research Foundation) under Germany's Excellence Strategy -- EXC 2075 -- 390740016 and within grant AL~316/15-1 -- 468094890.
  K.\ Worthmann gratefully acknowledges funding by the German Research Foundation (DFG; grant WO~2056/14-1 -- 507037103).
  R.\ Strässer thanks the Graduate Academy of the SC SimTech for its support.
\end{funding}

\bibliographystyle{unsrt}
\bibliography{literature}

\appendix 
\section{Proof of Theorem~\ref{thm:stability-continuous-time}}\label{sec:app:proof-stability}
\begin{figure*}[!t]
    \addtocounter{equation}{2}
    \begin{equation}\label{eq:app:proof-stability-1}
        \begin{bmatrix}
            A_K(z)P + PA_K(z)^\top + \rho I_N + \tau(z) I_N & P & PK(z)^\top \\
            P & -\frac{\tau(z)}{2c_x^2} I_N & 0 \\
            K(z)P & 0 & -\frac{\tau(z)}{2c_u^2}I_m
        \end{bmatrix}
        \preceq 0
    \end{equation}
    \vspace*{-0.3\baselineskip}
    \hrule
    \medskip
    \vspace*{-0.3\baselineskip}
    \begin{equation}\label{eq:app:proof-stability-2}
        \begin{bmatrix}
            P^{-1} A_K(z) + A_K(z)^\top P^{-1} + \rho P^{-2} + \tau(z) P^{-2} & I_N & K(z)^\top \\
            I_N & -\frac{\tau(z)}{2c_x^2} I_N & 0 \\
            K(z) & 0 & -\frac{\tau(z)}{2c_u^2}I_m
        \end{bmatrix}
        \preceq 0
    \end{equation}
    \vspace*{-0.3\baselineskip}
    \hrule
    \medskip
    \vspace*{-0.3\baselineskip}
    \addtocounter{equation}{1}
    \begin{equation}\label{eq:proof-continuous-time-Schur-complement-1}
        \begin{bmatrix}
            P^{-1} A_K(z) + A_K(z)^\top P^{-1} + \rho P^{-2} & P^{-1} & I_N & K(z)^\top \\
            P^{-1} & -\frac{1}{\tau(z)}I_N & 0 & 0 \\
            I_N & 0 & -\frac{\tau(z)}{2c_x^2} I_N & 0 \\
            K(z) & 0 & 0 & -\frac{\tau(z)}{2c_u^2}I_m
        \end{bmatrix}
        \preceq 0
    \end{equation}
    \medskip
    \vspace*{-0.6\baselineskip}
    \hrule
    \vspace*{-0.6\baselineskip}
\end{figure*}
We consider the Lyapunov function candidate $V(x) = \widehat{\Phi}(x)^\top P^{-1} \widehat{\Phi}(x)$ for some positive definite $P^{-1}\succ 0$, where a suitable lower and upper bound on $V(x)$ can be constructed using~\eqref{eq:observables-Lipschitz}; see~\cite[Eq.~46]{strasser:schaller:worthmann:berberich:allgower:2025}.
In the following, we establish the Lyapunov inequality 
\addtocounter{equation}{-6}
\begin{multline}\label{eq:proof-continuous-Lyapunov-inequality}
    \dot{V}(x) = (\tfrac{\dd}{\dd t} \widehat{\Phi}(x))^\top P^{-1} \widehat{\Phi}(x) 
    \\
    + \widehat{\Phi}(x)^\top P^{-1} (\tfrac{\dd}{\dd t} \widehat{\Phi}(x)) 
    \leq - \varepsilon\|x\|^2    
\end{multline}
with some $\varepsilon > 0$ for all $x\in\Omega(c^*)$.
To this end, we establish closed-loop stability of the bilinear surrogate with state $\hat{\Phi}(x)$ before exploiting~\eqref{eq:observables-Lipschitz} to ensure that all trajectories of the original state $x$ decay to zero if $\hat{\Phi}(x)$ decays to zero. First, we define $K(z) = L(z)P^{-1}$ and
\begin{equation}\label{eq:proof-A_K}
    A_K(z) = A + B_0 K(z) + \tB(K(z)\kron z).
\end{equation}
Then, by exploiting $L(z)\kron z = (K(z)\kron z) P$, $Q(z)\in\SOS[z,2\alpha]^{2N+m}$ ensures~\eqref{eq:app:proof-stability-1} for all $z\in\bbR^N$.\footnote{The SOS program $Q(z)\in\SOS[z,2\alpha]^{2N+m}$ can only be feasible for $\alpha\geq \max\{1,\beta\}$. This allows us to compensate for the negative impact of the term $-\tau(z)I_N$ in the upper left element.}
Multiplying from the left and the right by $\blkdiag(P^{-1},I,I)$ leads to~\eqref{eq:app:proof-stability-2} which is equivalent to 
\small
\addtocounter{equation}{2}
\begin{multline}
\hspace*{-0.03\columnwidth}
    \begin{bmatrix}
        P^{-1} A_K(z) + A_K(z)^\top P^{-1} + \rho P^{-2} & I_N & K(z)^\top \\
        I_N & -\frac{\tau(z)}{2c_x^2} I_N & 0 \\
        K(z) & 0 & -\frac{\tau(z)}{2c_u^2}I_m
    \end{bmatrix}
    \\
    - \begin{bmatrix}
        P^{-1} \\ 0 \\ 0
    \end{bmatrix}
    \left(-\tau(z)I_N\right)
    \begin{bmatrix}
        P^{-1} \\ 0 \\ 0
    \end{bmatrix}^\top
    \preceq 0
\end{multline} 
\normalsize
for all $z\in\bbR^n$.
Now, we apply the Schur complement and reorder the block rows and columns to obtain equivalently~\eqref{eq:proof-continuous-time-Schur-complement-1}.
Using again the Schur complement, now w.r.t. the last two block rows and columns,~\eqref{eq:proof-continuous-time-Schur-complement-1} reads equivalently
\addtocounter{equation}{1}
\begin{multline}\label{eq:proof-continuous-time-Schur-complement-2}
    \begin{bmatrix}
        P^{-1} A_K(z) + A_K(z)^\top P^{-1} + \rho P^{-2} & P^{-1} \\
        P^{-1} & -\frac{1}{\tau(z)}I_N
    \end{bmatrix}
    \\
    + \left[\star\right]^\top
    \begin{bmatrix}
        \frac{2c_x^2}{\tau(z)} I_N & 0 \\ 0 & \frac{2c_u^2}{\tau(z)}I_m
    \end{bmatrix}
    \begin{bmatrix}
        I & 0 \\
        K(z) & 0
    \end{bmatrix}
    \preceq 0.
\end{multline}
Note that~\eqref{eq:proof-continuous-time-Schur-complement-2} can be decomposed into 
\small
\begin{multline}\label{eq:proof-continuous-time-Schur-complement-3}
    \left[\star\right]^\top 
    \begin{bmatrix}
        \rho P^{-2} & P^{-1} \\
        P^{-1} & 0
    \end{bmatrix}
    \begin{bmatrix}
        I & 0 \\
        A_K(z) & I
    \end{bmatrix}
    \\
    + \frac{1}{\tau(z)} 
    \left[\star\right]^\top
    \begin{bmatrix}
        2c_x^2 I_N & 0 & 0 \\ 0 & 2c_u^2 I_m & 0 \\ 0 & 0 & -I_N
    \end{bmatrix}
    \begin{bmatrix}
        I & 0 \\
        K(z) & 0 \\
        0 & I
    \end{bmatrix}
    \preceq 0.
\end{multline}
\normalsize
Substituting $z$ by $\widehat{\Phi}(x)\in\bbR^N$, multiplying~\eqref{eq:proof-continuous-time-Schur-complement-3} from the left and the right by $\begin{bmatrix} \widehat{\Phi}(x)^\top & r(x,u)^\top \end{bmatrix}$ and its transpose, respectively, and applying the generalized S-procedure~\cite{tan:2006,boyd:vandenberghe:2004} with multiplier $\tau(\widehat{\Phi}(x))^{-1}$, we deduce
\begin{equation}\label{eq:proof-continuous-Lyapunov-S-procedure}
    \left[\star\right]^\top 
    \begin{bmatrix}
        \rho P^{-2} & P^{-1} \\
        P^{-1} & 0
    \end{bmatrix}
    \begin{bmatrix}
        I & 0 \\
        A_K(\widehat{\Phi}(x)) & I
    \end{bmatrix}
    \begin{bmatrix}
        \widehat{\Phi}(x) \\ 
        r(x,u)
    \end{bmatrix}
    \leq 0
\end{equation}
for all $x\in\bbR^n$ if the residual $r$ satisfies
\small
\begin{equation}\label{eq:proof-continuous-residual-S-procedure}
    \left[\star\right]^\top 
    \begin{bmatrix}
        2c_x^2 I_N & 0 & 0 \\ 0 & 2c_u^2 I_m & 0 \\ 0 & 0 & -I_N
    \end{bmatrix}
    \begin{bmatrix}
        I & 0 \\
        K(\widehat{\Phi}(x)) & 0 \\
        0 & I
    \end{bmatrix}
    \begin{bmatrix}
        \widehat{\Phi}(x) \\ r(x,u)
    \end{bmatrix}
    \geq 0.
\end{equation}
\normalsize
Now, recall $\tfrac{\dd}{\dd t}\widehat{\Phi}(x)=A_K(\widehat{\Phi}(x)) \widehat{\Phi}(x) + r(x,u)$ to observe that~\eqref{eq:proof-continuous-Lyapunov-S-procedure} is equivalent to 
\begin{align}
\hspace*{-0.01\columnwidth}
    \dot{V}(x) 
    \leq -\rho \widehat{\Phi}(x)^\top P^{-2} \widehat{\Phi}(x) 
    &\leq -\rho \sigma_\mathrm{min}(P^{-1})^2 \|\widehat{\Phi}(x)\|^2
    \nonumber\\
    &\leq -\rho \|P\|_2^2 L_\Phi^2\|x\|^2,
\end{align}
where we exploit~\eqref{eq:observables-Lipschitz}.
Hence, we have established the desired Lyapunov inequality~\eqref{eq:proof-continuous-Lyapunov-inequality} with $\varepsilon=\rho\|P\|_2^2L_\Phi^2$ for all $r$ satisfying~\eqref{eq:proof-continuous-residual-S-procedure}.
Since all $r$ satisfying the residual bound~\eqref{eq:proportional-error-bound} also satisfy 
\begin{equation}\label{eq:proof-continuous-error-bound-relaxation}
    \|r(x,u)\|^2 \leq 2 c_x^2 \|\widehat{\Phi}(x)\|^2 + 2 c_u^2 \|u\|^2,
\end{equation}
the residual satisfies~\eqref{eq:proof-continuous-residual-S-procedure} for $u=K(\widehat{\Phi}(x))\widehat{\Phi}(x)$.
It remains to show positive invariance of the set $\Omega(c^*)$. 
In particular, the proportional error bound~\eqref{eq:proportional-error-bound} is only guaranteed to hold on $\bbX$ with probability $1-\delta$. 
Hence, guarantees for the nonlinear system can only be deduced if the closed-loop system remains in $\bbX$ for all times.
To this end, we define the Lyapunov sublevel set $\Omega(c)$ in~\eqref{eq:stability-Omega} and maximize the value $c$ according to~\eqref{eq:stability-c-star} such that $\Omega(c^*)$ is the maximal Lyapunov sublevel set in $\bbX$. 
Then, $\Omega(c^*)$ characterizes the largest guaranteed region of attraction in the sampling region $\bbX$, which guarantees closed-loop exponential stability of the nonlinear system~\eqref{eq:dynamics-nonlinear} with probability $1-\delta$.
\qed
\section{Proof of Theorem~\ref{thm:performance-continuous-time}}\label{sec:app:proof-performance}
\begin{figure*}[!t]
    \small
    \addtocounter{equation}{3}
    \begin{equation}\label{eq:app:proof-performance-1}
    \hspace*{-0.01\linewidth}
        \begin{bmatrix}
            P^{-1} A_K(z) + A_K(z)^\top P^{-1} + \rho P^{-2} & \star & \star & \star & \star \\
            B_w(z)^\top P^{-1} + \frac{1}{\lambda(z)}S_w C_K(z) & \frac{1}{\lambda(z)} H_w + \frac{\eta}{\lambda(z)^2}I_p & \star & \star & \star \\
            I & 0 & -\frac{\tau(z)}{2c_x^2} I_N & \star & \star \\
            K(z) & 0 & 0 & -\frac{\tau(z)}{2c_u^2}I_m & \star \\
            C_K(z) & D_w & 0 & 0 & -\lambda(z) R_w^{-1}
        \end{bmatrix}
        - \begin{bmatrix}
            P^{-1} \\ 0 \\ 0 \\ 0 \\ 0
        \end{bmatrix}
        (-\tau(z)I_N) 
        \left[\star\right]^\top
        \preceq 0
    \end{equation}
    \normalsize
    \vspace*{-0.3\baselineskip}
    \hrule
    \medskip
    \vspace*{-0.3\baselineskip}
    \small
    \begin{equation}\label{eq:proof:performance-continuous-time-Schur-1}
        \begin{bmatrix}
            P^{-1} A_K(z) + A_K(z)^\top P^{-1} + \rho P^{-2} & \star & \star & \star & \star & \star \\
            P^{-1} & -\frac{1}{\tau(z)}I_N & \star & \star & \star & \star \\
            B_w(z)^\top P^{-1} + \frac{1}{\lambda(z)}S_w C_K(z) & 0 & \frac{1}{\lambda(z)} H_w + \frac{\eta}{\lambda(z)^2}I_p & \star & \star & \star \\
            I & 0 & 0 & -\frac{\tau(z)}{2c_x^2} I_N & \star & \star \\
            K(z) & 0 & 0 & 0 & -\frac{\tau(z)}{2c_u^2}I_m & \star \\
            C_K(z) & 0 & D_w & 0 & 0 & -\lambda(z) R_w^{-1}
        \end{bmatrix}
        \preceq 0
    \end{equation}
    \normalsize
    \hrule
    \medskip
    \vspace*{-0.5\baselineskip}
    \small
    \begin{equation}\label{eq:proof:performance-continuous-time-Schur-2}
    \hspace*{-0.015\linewidth}
        \begin{bmatrix}
            P^{-1} A_K(z) + A_K(z)^\top P^{-1} + \rho P^{-2} & \star & \star \\
            P^{-1} & -\frac{1}{\tau(z)}I_N & \star \\
            B_w(z)^\top P^{-1} + \frac{1}{\lambda(z)}S_w C_K(z) & 0 & \frac{1}{\lambda(z)} H_w + \frac{\eta}{\lambda(z)^2}
        \end{bmatrix}
        + \left[\star\right]^\top
        \begin{bmatrix}
            \frac{2c_x^2}{\tau(z)} I_N & 0 & 0 \\ 0 & \frac{2c_u^2}{\tau(z)}I_m & 0 \\ 0 & 0 & \frac{1}{\lambda(z)} R_w
        \end{bmatrix}
        \begin{bmatrix}
            I & 0 & 0 \\
            K(z) & 0 & 0 \\
            C_K(z) & 0 & D_w
        \end{bmatrix}
        \preceq 0
    \end{equation}
    \normalsize
    \medskip
    \vspace*{-0.9\baselineskip}
    \hrule
    \vspace*{-0.6\baselineskip}
\end{figure*}
In the following, we establish exponential stability and quadratic performance of the lifted system~\eqref{eq:dynamics-lifted-performance} and thus, if the proportional error bound~\eqref{eq:proportional-error-bound} holds, of the nonlinear system~\eqref{eq:dynamics-nonlinear-performance}.
To this end, we define the Lyapunov candidate function $V(x) = \widehat{\Phi}(x)^\top P^{-1} \widehat{\Phi}(x)$ and recall $A_K(z)$ in~\eqref{eq:proof-A_K}. 
Further, we define $K(z) = L(z) P^{-1}$ and 
\addtocounter{equation}{-6}
\begin{align}
    C_K(z) 
    &= C + DK(z) + \tD(K(z)\kron z),
    \\
    H_w 
    &= Q_w + S_wD_w + D_w^\top S_w^\top.        
\end{align}
Then, multiplying~\eqref{eq:performance-continuous-time} from the left and the right by
\begin{equation}
    \begin{bmatrix}
        P^{-1} & 0 & 0 & 0 & 0 \\
        0 & 0 & 0 & \frac{1}{\lambda(z)} I_N & 0 \\
        0 & I_N & 0 & 0 & 0 \\
        0 & 0 & I_m & 0 & 0 \\
        0 & 0 & 0 & 0 & I_q
    \end{bmatrix}
\end{equation}
and its transpose, respectively, yields~\eqref{eq:app:proof-performance-1} for all $z\in\bbR^N$. 
We apply the Schur complement and reorder the block rows and columns to obtain equivalently~\eqref{eq:proof:performance-continuous-time-Schur-1}.
Using again the Schur complement, now w.r.t. the three last block rows and columns,~\eqref{eq:proof:performance-continuous-time-Schur-1} is equivalent to~\eqref{eq:proof:performance-continuous-time-Schur-2}. 
Note that~\eqref{eq:proof:performance-continuous-time-Schur-2} can be equivalently decomposed into
\addtocounter{equation}{3}
\begin{align}
    &\left[\star\right]^\top 
    \begin{bmatrix}
        \rho P^{-2} & P^{-1} \\
        P^{-1} & 0
    \end{bmatrix}
    \begin{bmatrix}
        I & 0 & 0 \\
        A_K(z) & I & B_w(z)
    \end{bmatrix}
    \nonumber\\
    &+ \frac{1}{\tau(z)} 
    \left[\star\right]^\top
    \begin{bmatrix}
        2c_x^2 I_N & 0 & 0 \\ 0 & 2c_u^2 I_m & 0 \\ 0 & 0 & -I_N
    \end{bmatrix}
    \begin{bmatrix}
        I & 0 & 0 \\
        K(z) & 0 & 0 \\
        0 & I & 0
    \end{bmatrix}
    \nonumber\\
    &+ \frac{1}{\lambda(z)} 
    \left[\star\right]^\top 
    \begin{bmatrix}
        Q_w + \frac{\eta}{\lambda(z)} & S_w \\ S_w^\top & R_w
    \end{bmatrix}
    \begin{bmatrix}
        0 & 0 & I \\
        C_K(z) & 0 & D_w
    \end{bmatrix}
    \preceq 0.
\label{eq:proof:performance-continuous-time-Schur-3}
\end{align}
Recall~\eqref{eq:supply-rate} for the supply rate $s(w,y)$ and~\eqref{eq:dynamics-lifted-performance} for the lifted (closed-loop) dynamics, i.e.,
\begin{align}
\hspace*{-0.031\columnwidth}
    \,\ddt{} \widehat{\Phi}(x) 
    &= A_K(\widehat{\Phi}(x))\widehat{\Phi}(x) + r(x,u) + B_w(\widehat{\Phi}(x))w,\!\!
    \\
    y &= C_K(\widehat{\Phi}(x))\widehat{\Phi}(x) + D_w w.   
\end{align}
Then, substituting $z$ by $\widehat{\Phi}\in\bbR^N$, multiplying~\eqref{eq:proof:performance-continuous-time-Schur-3} from the left and the right by $\begin{bmatrix} \widehat{\Phi}(x)^\top & r(x,u)^\top & w^\top\end{bmatrix}$ and its transpose, respectively, and applying the generalized S-procedure~\cite{tan:2006} with multiplier $\tau(\widehat{\Phi}(x))^{-1}$ yields
\begin{multline}\label{eq:proof:performance-continuous-time-dissipation-inequality}
    \dot{V}(x) 
    \leq 
    - \rho\|P\|_2^2 \|\widehat{\Phi}(x)\|^2 
    \\
    - \frac{\eta}{\lambda(\widehat{\Phi}(x))^2}\|w\|^2 
    - \frac{1}{\lambda(\widehat{\Phi}(x))} s(w,y)
\end{multline}
for all $x\in\bbR^n$ and $r$ satisfying the proportional bound~\eqref{eq:proportional-error-bound} according to the discussion around~\eqref{eq:proof-continuous-error-bound-relaxation}.
Here, we have used~\eqref{eq:observables-Lipschitz} to bound the norm of $\widehat{\Phi}(x)$ in terms of $x$.
Moreover, we define $\lambda^*=\min_{z\in\bbR^N}\lambda(z)$, where we know $\lambda^* > 0$ since $\lambda\in\SOS_+[z,2\beta]$.
Thus, we deduce for $\varepsilon = \nicefrac{\eta}{\lambda^*}$ the inequality
\begin{equation}\label{eq:proof:performance-continuous-time-dissipation-inequality-3}
    \dot{V}(x) \leq - \frac{\varepsilon}{\lambda(z)} \|w\|^2 - \frac{1}{\lambda(z)} s(w,y).
\end{equation}
Hence, by integrating~\eqref{eq:proof:performance-continuous-time-dissipation-inequality-3} from $t=0$ to $\infty$, we establish~\eqref{eq:quadratic-performance} by using $x(0)=0$ and $\lim_{t\to\infty} x(t)=0$, and, thus, quadratic performance according to Definition~\ref{def:quadratic-performance}.
\qed
\section{Proof of Corollary~\ref{cor:performance-continuous-time-SafEDMD}}\label{sec:app:proof-performance-corollary}
Recall that Theorem~\ref{thm:performance-continuous-time} establishes the Lyapunov inequality~\eqref{eq:proof:performance-continuous-time-dissipation-inequality} for all $x\in\bbX$, $w\in\cL_2$ and, thus, for all $x\in\Omega(c^*)\subseteq\bbX$, $\|w\|^2\leq \nu$.
To show robust positive invariance of $\Omega(c^*)$, we need to construct $\nu$ such that $x(t+\chi)\in\Omega(c^*)$ for all $x(t)\in\Omega(c^*)$, $\|w\|^2\leq \nu$, and $\chi>0$. 
The established Lyapunov inequality~\eqref{eq:proof:performance-continuous-time-dissipation-inequality} leads to 
\begin{equation}
    \dot{V}(x) \leq - \rho\|P\|_2^2 \|\widehat{\Phi}(x)\|^2 + \frac{1}{\lambda(\widehat{\Phi}(x))} \alpha(\|w\|^2).
\end{equation} 
Using $V(x)\leq \|\widehat{\Phi}(x)\|^2 \|P\|_2^{-1}$ or, in particular, $\|\widehat{\Phi}(x)\|^2 \geq V(x)\|P\|_2$, yields
\begin{equation}\label{eq:proof:performance-continuous-time-dissipation-inequality-2}
    \dot{V}(x(t)) \leq -\beta V(x(t)) + \frac{1}{\lambda(\widehat{\Phi}(x(t)))} \alpha(\|w(t)\|^2),
\end{equation}
where we define $\beta = \rho\|P\|_2^3$.
Then, we multiply both sides of~\eqref{eq:proof:performance-continuous-time-dissipation-inequality-2} by $e^{\beta t}$ and use $\|w(t)\|^2\leq \nu := \alpha^{-1}(\lambda^* c^*)$ to obtain 
\begin{equation}
    e^{\beta t}\ddt{}V(x(t)) 
    + \beta e^{\beta t} V(x(t)) 
    \leq 
    c^* e^{\beta t}.
\end{equation} 
Note that the left-hand side of the inequality is just the total time derivative of $e^{\beta t} V(x(t))$.
Thus, integrating both sides on the interval $[t,t+\chi]$ for any $\chi > 0$ yields
\begin{multline}
    e^{\beta(t+\chi)} V(x(t+\chi)) 
    - e^{\beta t}V(x(t)) 
    \\ 
    \leq 
    c^* \int_{t}^{t+\chi} e^{\beta s}\dd s
    =
    c^* e^{\beta t}\left[e^{\beta \chi}-1\right].
\end{multline} 
Hence, 
\begin{equation}
    V(x(t+\chi)) 
    \leq e^{-\beta\chi} V(x(t)) 
    + c^* \left[1 - e^{-\beta\chi}\right].
\end{equation}
For all $x(t)\in\Omega(c^*)$, i.e., $V(x(t))\leq c^*$, we obtain
\begin{equation}
    V(x(t+\chi)) 
    \leq e^{-\beta\chi} c^*
    + c^*\left[1 - e^{-\beta\chi}\right]
    = c^*
\end{equation}
for all $\|w(t)\|^2\leq \nu$.
Thus, we have established the existence of a $\nu>0$ such that the RoA $\Omega(c^*)$ is robust positively invariant.

\vskip10pt
\begin{wrapfigure}{l}{25mm} 
    \includegraphics[width=1in,height=1.25in,clip,keepaspectratio]{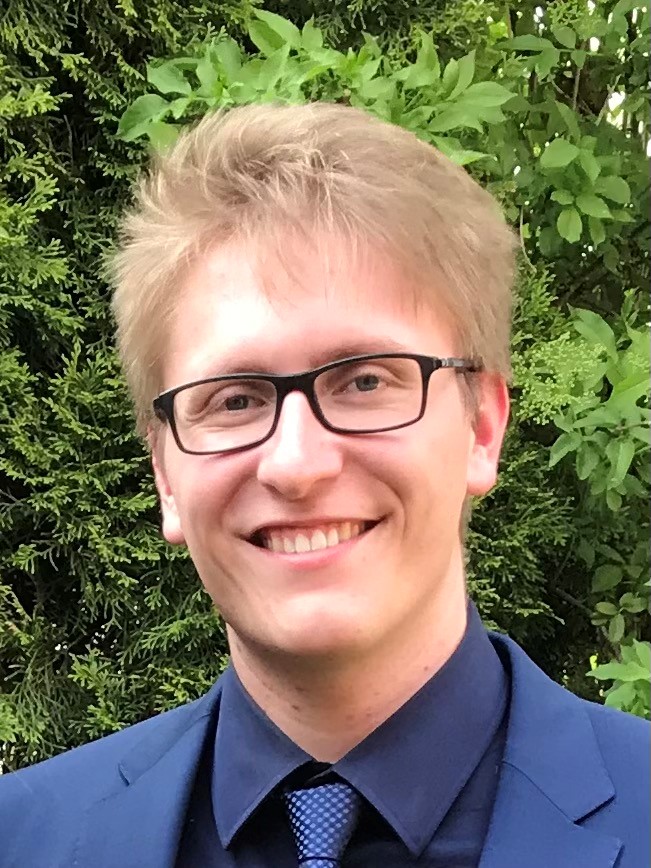}
  \end{wrapfigure}\par
  \textbf{Robin Str\"asser} received a master’s degree in Simulation Technology from the University of Stuttgart, Germany, in 2020. Since 2020, he has been a Research and Teaching Assistant with the Institute for Systems Theory and Automatic Control and a member of the Graduate School Simulation Technology at the University of Stuttgart. His research interests include data-driven system analysis and control, with a focus on nonlinear systems. Robin Strässer received the Best Poster Award at the International Conference on Data-Integrated Simulation Science (SimTech2023).\par
\vskip10pt

\begin{wrapfigure}{l}{25mm} 
    \includegraphics[width=1in,height=1.25in,clip,keepaspectratio]{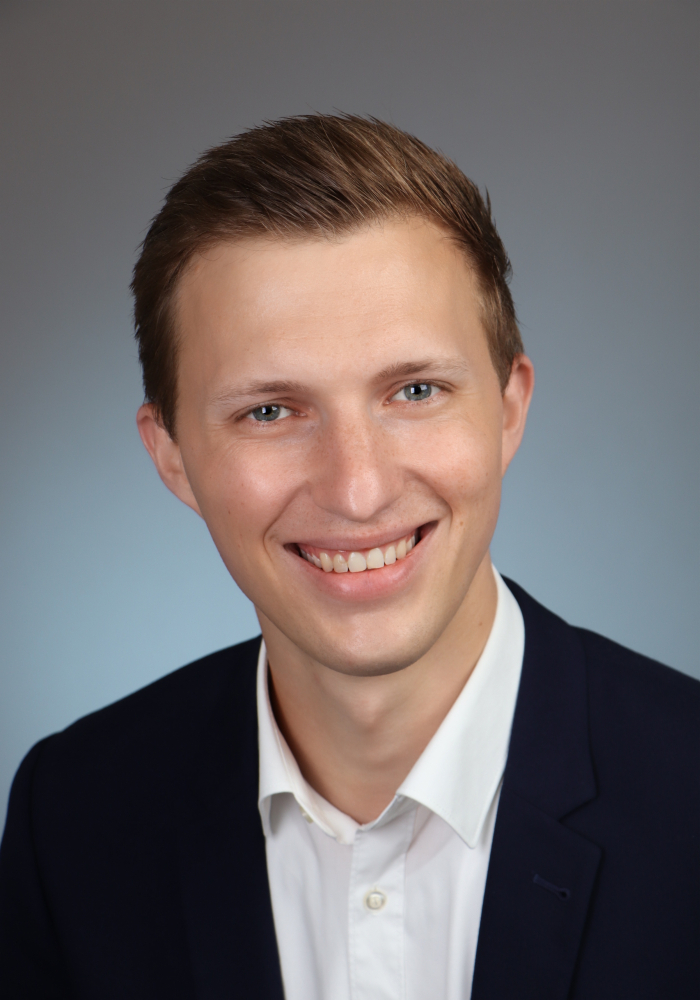}
  \end{wrapfigure}\par
  \textbf{Julian Berberich} is a Lecturer (Akademischer Rat) at the Institute for Systems Theory and Automatic Control at the University of Stuttgart, Germany. He received his Ph.D. in Mechanical Engineering in 2022, and a Master’s degree in Engineering Cybernetics in 2018, both from the University of Stuttgart, Germany. In 2022, he was a visiting researcher at ETH Zürich, Switzerland. He is a recipient of the 2022 George S. Axelby Outstanding Paper Award as well as the Outstanding Student Paper Award at the 59th IEEE Conference on Decision and Control in 2020. His research interests include data-driven analysis and control as well as quantum computing.\par
\vskip10pt

\begin{wrapfigure}{l}{25mm} 
    \includegraphics[width=1in,height=1.25in,clip,keepaspectratio]{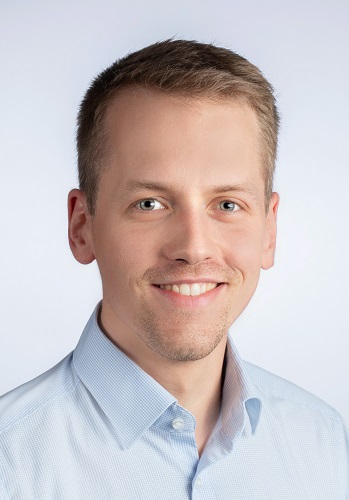}
  \end{wrapfigure}\par
  \textbf{Manuel Schaller} obtained the M.Sc.\ and Ph.D.\ in Applied Mathematics from the University of Bayreuth in 2017 and 2021 respectively. From 2020 to 2023 he held a Lecturer position and a Junior Professorship at Technische Universität Ilmenau, Germany. Since August 2024, he is tenure track assistant professor at Chemnitz University of Technology, Germany. His research focuses on data-driven control with guarantees, port-Hamiltonian systems and efficient numerical methods for optimal control.
  For his research he has been named junior fellow of the GAMM (Society for Applied Mathematics and Mechanics), received the Best Poster Award at the workshop on systems theory and PDEs (WOSTAP 2022) and is an elected member of the Young Academy of the European Mathematical Society (EMS) in 2024-2027.\par
\vskip10pt

\begin{wrapfigure}{l}{25mm} 
    \includegraphics[width=1in,height=1.25in,clip,keepaspectratio]{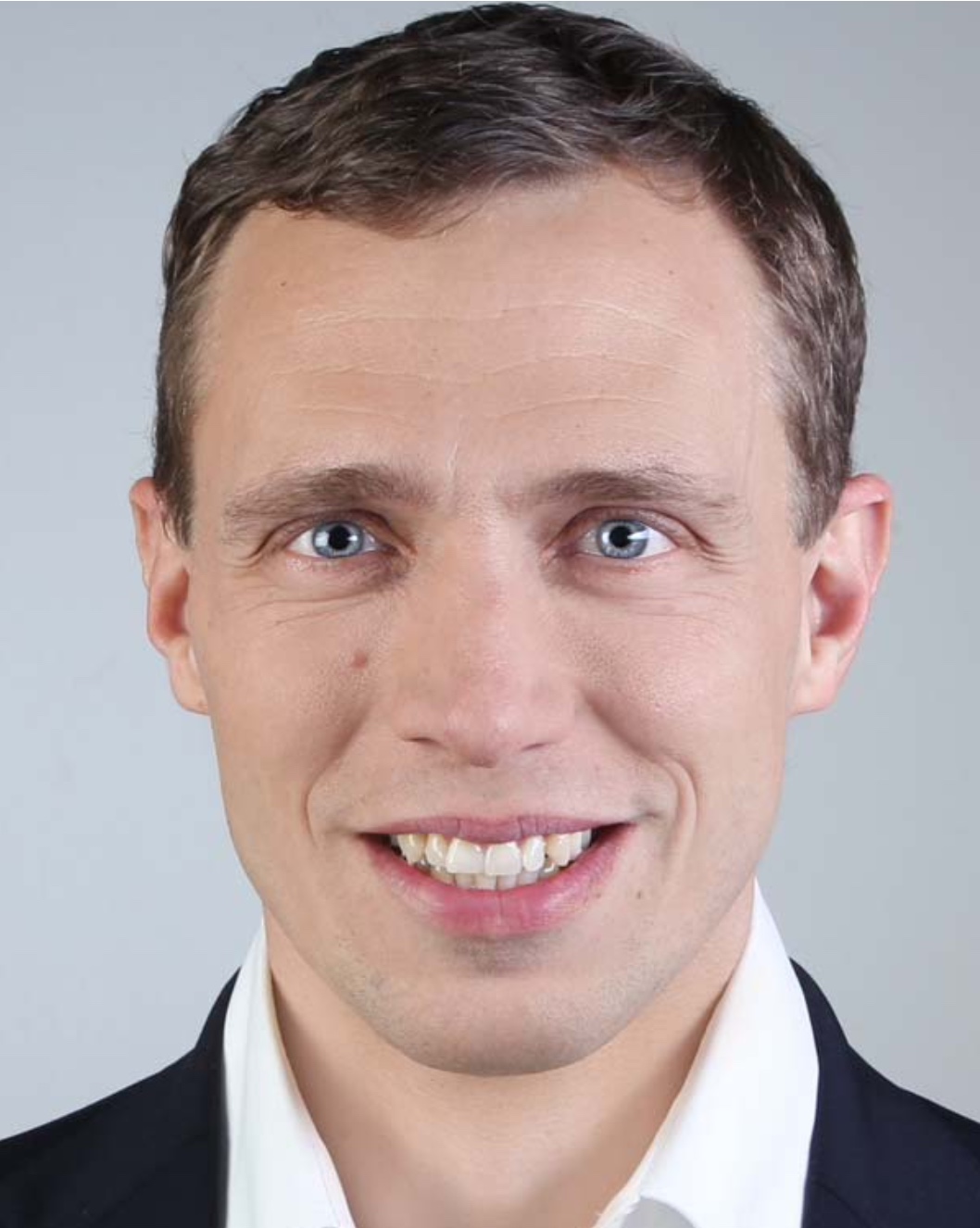}
  \end{wrapfigure}\par
  \textbf{Karl Worthmann} received his Ph.D.\ degree in mathematics from the University of Bayreuth, Germany, in 2012. 2014 he become assistant professor for ''Differential Equations'' at Technische Universität Ilmenau (TU Ilmenau), Germany. 2019 he was promoted to full professor after receiving the Heisenberg-professorship ''Optimization-based Control'' by the German Research Foundation (DFG). He was recipient of the Ph.D.\ Award from the City of Bayreuth, Germany, and stipend of the German National Academic Foundation. 2013 he has been appointed Junior Fellow of the Society of Applied Mathematics and Mechanics (GAMM), where he served as speaker in 2014 and 2015. 
  His current research interests include systems and control theory with a particular focus on nonlinear model predictive control, stability analysis, and data-driven control.\par
\vskip10pt

\begin{wrapfigure}{l}{25mm} 
    \includegraphics[width=1in,height=1.25in,clip,keepaspectratio]{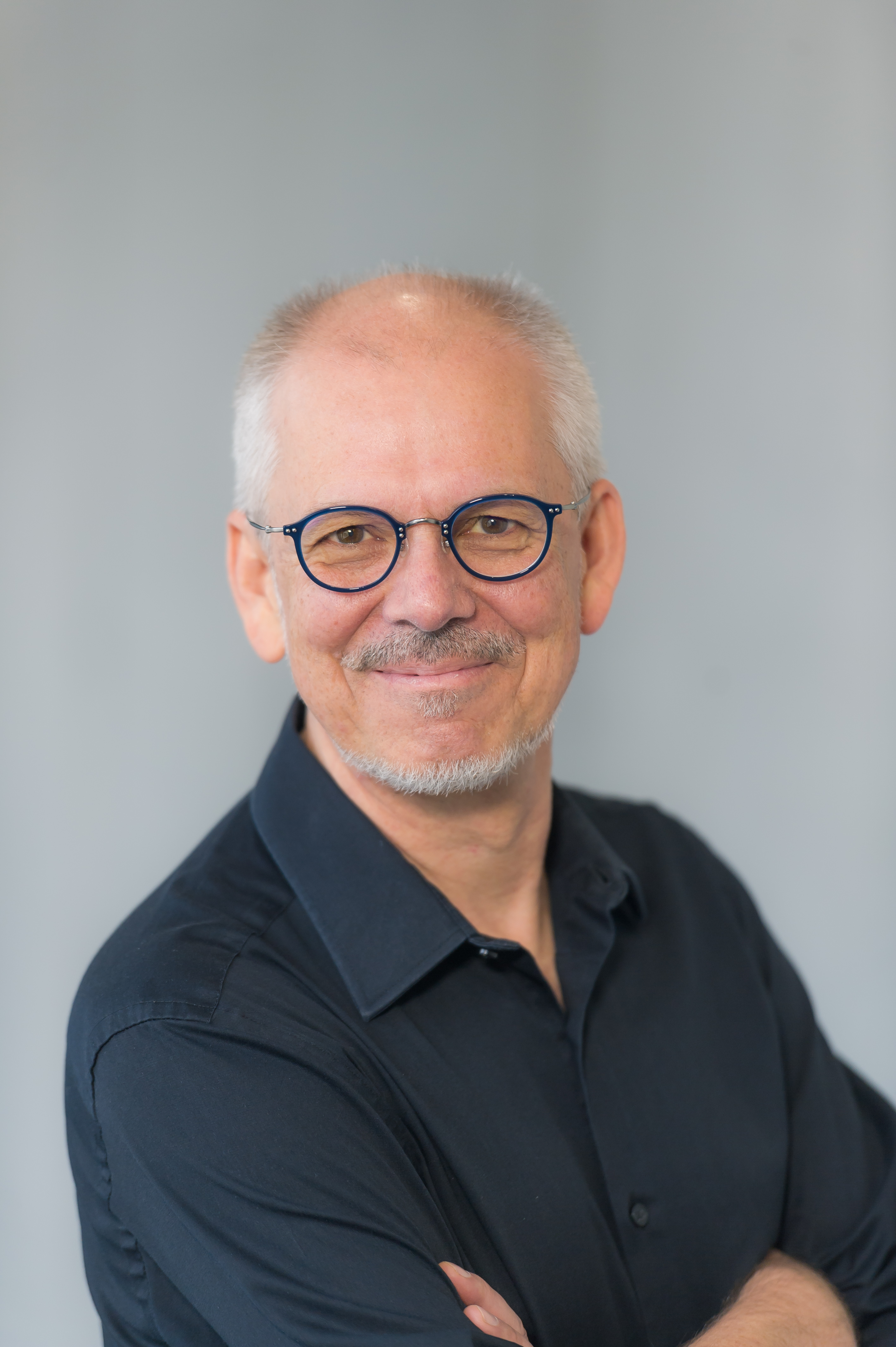}
  \end{wrapfigure}\par
  \textbf{Frank Allg\"ower} studied engineering cybernetics and applied mathematics in Stuttgart and with the University of California, Los Angeles (UCLA), CA, USA, respectively, and received the Ph.D. degree from the University of Stuttgart, Stuttgart, Germany. Since 1999, he has been the Director of the Institute for Systems Theory and Automatic Control and a professor with the University of Stuttgart. His research interests include predictive control, data-based control, networked control, cooperative control, and nonlinear control with application to a wide range of fields including systems biology. Dr. Allgöwer was the President of the International Federation of Automatic Control (IFAC) in 2017–2020 and the Vice President of the German Research Foundation DFG in 2012–2020.\par
\vskip10pt

\end{document}